\newcommand{\journey}{\text{journey}}
\newcommand{\eat}{\text{eat}}
\newcommand{\arr}{\text{arr}}
\newcommand{\dep}{\text{dep}}
\newcommand{\leftvertex}{\text{left}}
\newcommand{\rightvertex}{\text{right}}
\newcommand\vldbpagestyle{plain}
\begin{document}
\title{Efficient Algorithms for Earliest and Fastest Paths in Public Transport Networks}

%%
%% The "author" command and its associated commands are used to define the authors and their affiliations.
\author{Mithinti Srikanth}
\affiliation{%
  \institution{Indian Institute of Technology Tirupati}
  \city{Tirupati}
  \state{Andhra Pradesh}
  \country{India}
}
\email{srikanth.mithinti@gmail.com}

\author{G. Ramakrishna}
\orcid{0000-0002-1825-0097}
\affiliation{%
  \institution{Indian Institute of Technology Tirupati}
  \city{Tirupati}
  \state{Andhra Pradesh}
  \country{India}
}
\email{rama@iittp.ac.in}

\begin{abstract}
Public transport administrators rely on efficient algorithms for various problems that arise in public transport networks. In particular, our study focused on designing linear-time algorithms for two fundamental path problems: the earliest arrival time (\textsc{eat}) and the fastest path duration (\textsc{fpd})  on public transportation data.
%by proposing structural properties of these two problems. 
We conduct a comparative analysis with state-of-the-art algorithms. The results are quite promising, indicating substantial efficiency improvements. Specifically, the fastest path problem shows a remarkable 34-fold speedup, while the earliest arrival time problem exhibits an even more impressive 183-fold speedup. These findings highlight the effectiveness of our algorithms to solve \textsc{eat} and \textsc{fpd} problems in public transport, and eventually help public administrators to enrich the urban transport experience.
%to handle relevant problems.
%in optimizing path-finding processes within public transportation networks represented as temporal graphs. 
\end{abstract}

\maketitle

%%% do not modify the following VLDB block %%
%%% VLDB block start %%%
\pagestyle{\vldbpagestyle}
% \begingroup\small\noindent\raggedright\textbf{PVLDB Reference Format:}\\
% \vldbauthors. \vldbtitle. PVLDB, \vldbvolume(\vldbissue): \vldbpages, \vldbyear.\\
% \href{https://doi.org/\vldbdoi}{doi:\vldbdoi}
% \endgroup
% \begingroup
% \renewcommand\thefootnote{}\footnote{\noindent
% This work is licensed under the Creative Commons BY-NC-ND 4.0 International License. Visit \url{https://creativecommons.org/licenses/by-nc-nd/4.0/} to view a copy of this license. For any use beyond those covered by this license, obtain permission by emailing \href{mailto:info@vldb.org}{info@vldb.org}. Copyright is held by the owner/author(s). Publication rights licensed to the VLDB Endowment. \\
% \raggedright Proceedings of the VLDB Endowment, Vol. \vldbvolume, No. \vldbissue\ %
% ISSN 2150-8097. \\
% \href{https://doi.org/\vldbdoi}{doi:\vldbdoi} \\
% }\addtocounter{footnote}{-1}\endgroup
%%% VLDB block end %%%

%%% do not modify the following VLDB block %%
%%% VLDB block start %%%
% \ifdefempty{\vldbavailabilityurl}{}{
% \vspace{.3cm}
% \begingroup\small\noindent\raggedright\textbf{PVLDB Artifact Availability:}\\
% The source code, data, and/or other artifacts have been made available at \url{\vldbavailabilityurl}.
% \endgroup
% }
%%% VLDB block end %%%

\textbf{Index Terms:}Temporal Graph, Fastest Path, Earliest Arrival Path, Edge Scan Dependency Graph, Algorithm Engineering

\section{Introduction}

Route optimization on public transportation is crucial for urban public transport administrators involved in route planning. 
Intelligent route optimization algorithms are being used in transportation software, to enhance traffic flow and reduce environmental impact. These algorithms are used by various stakeholders to address critical questions such as travel time estimation, maximizing city coverage, and efficient urban transport experience. A vital component of public transportation networks is the scheduled timetable information, encompassing vehicle departure and arrival times at stops across various routes. This valuable data set is efficiently handled and represented using temporal graphs.
% A public transport network can be modelled using temporal graphs.

A \emph{temporal graph} is a weighted directed graph in which departure time and duration time are assigned to each edge. Given a temporal graph $G$, a source vertex $s$, and a ready time at source vertex $rt$, the earliest arrival time problem (\textsc{eat}) is to find the minimum arrival times of paths from $s$ to all the vertices, in which the departure of each path is at least $rt$.
Given a temporal graph $G$ and a source vertex $s$, the fastest path duration problem (\textsc{fpd}) is to find minimum journey times from $s$ to all the vertices, where the journey time of a path is the difference between its arrival time and the departure time. 

% \textcolor{red}{Formal defintions of fastest path duration, earliest arrival time}
% \textcolor{red}{computing the fastest paths in temporal graph is challenging as sub path of a fastest path may not be fastest}

% -chain edges is 54 percent of the primary edges in TRG.
% - TRG has 1.5 times edges from the original temporal graph.
% - ESD-graph has 60 percent of the edges with respect to TRG.
% - ESD-graph has 1.04 times edges from the original temporal graph.
% - In public transportation networks, the chain length tends to be slightly higher than the temporal out-degree at every vertex u. Notably, the temporal average out-degree in public transportation networks is significantly larger than general temporal graphs.

% Various graph representations are being evolved to design efficient graph algorithms based on various factors such as the nature of the graph problem and the corresponding application, type of computations and memory access patterns. 
% Broadly, various path problems can be classified into two variants namely single-source and goal-oriented. In the single-source variant, the objective is to compute certain values such as distance, earliest arrival time, fastest path duration etc., from the given source to all the vertices. On the other hand, any interesting measure of paths is computed from the given source to the given target vertex in the goal-oriented variant. We consider single-source \textsc{eat} and single-source \textsc{fdp} on public transport networks in this paper.

Multiple graph representations are evolved based on the nature of the graph problems, associated applications, and various factors such as computation type and memory access patterns. Broadly, various path problems can be classified into two variants namely single-source and goal-oriented. 
% In the single-source variant, the objective is to compute certain values such as distance, earliest arrival time, fastest path duration etc., from the given source to all the vertices.
Single-source problems aim to find values like distance or arrival time from one starting point to all vertices. Goal-oriented problems calculate path measures from a specific source to a particular target. This paper focuses on single-source \textsc{eat} and single-source \textsc{fpd} in public transport networks.
% Path problems can be grouped into two main types: single-source and goal-oriented. Single-source problems aim to find values like distance or arrival time from one starting point to all vertices. Goal-oriented problems calculate path measures from a specific source to a particular target. This paper focuses on single-source \textsc{eat} and single-source \textsc{fdp} in public transport networks.

% To the best of our knowledge, 
To date, edge stream algorithm is the latest algorithm to solve single-source \textsc{eat} problem  using edge-stream representation in public transport networks and real world temporal graphs ~\cite{Dibbelt13intriguinglysimple} ~\cite{wu_2016_Fast_EAT_Temporal}.
Single-pass and traversal based multi-pass algorithms are the best known algorithms to solve single-source \textsc{fpd} problem in real word temporal graphs, by using edge-stream  and time-respecting graph (\textsc{trg}) representations, respectively ~\cite{wu_2016_Fast_EAT_Temporal, Sahani2021}.

In the edge-stream format all the edges of a graph are arranged in an array in non-decreasing order based on their departure time. 
Algorithms designed on the edge-stream format to solve \textsc{eat} and \textsc{fpd} perform well due to spatial data locality. The drawback in these algorithms (\cite{Dibbelt13intriguinglysimple, wu_2016_Fast_EAT_Temporal}) is that all edges of the graph are processed independent of the source vertex given in the query time.

The graph traversal based multi-pass \textsc{trg} algorithm performs better than the single-pass algorithm to solve \textsc{fpd}, due to pruning. In the \textsc{trg} algorithm, whenever a node $(u,t)$ is visited, all the other nodes  $(u, t')$, where $t' > t$ are visited, and all of their outgoing edges are processed. We observe the following two drawbacks in the \textsc{trg} approach \cite{Sahani2021}.  
i) For every vertex $u$ in $G$, all the departure nodes $(u,t)$ are connected by a chain of edges, and the chain length is at least the temporal out-degree of $u$.  In public transport networks, the temporal out-degree is quite large when compared to their static out-degree as shown in Table~\ref{tab:AvgOutDegree}. From Fig~\ref{fig:chainEdgesprocessingpercentage}, it is evident that around 45\% of the total running time is spent towards processing chain edges, which is a bottleneck.
%All the outgoing edges of visited nodes are processed in the algorithm to propagate the starting time of a journey, whose final effect is due to one temporal edge for all the temporal edges associated with a static edge. 
ii) For every static edge $(u,v)$, all the temporal edges $(u,v,t,\lambda)$ are processed if $u$ is reached on or before $t$, whose final effect is due to one temporal edge. In other words, whenever a vertex $u$ is visited at time $t$, processing $(u,v, t', \lambda')$ is not required, if there exist another edge $(u,v, t'', \lambda'')$ such that $t' + \lambda' > t'' + \lambda''$ or $t' < t$. 

\begin{table}[H]
\centering
\resizebox{\columnwidth}{!}{
\begin{tabular}{l|cc|cc|}
\cline{2-5}
 & \multicolumn{2}{c|}{\textbf{Temporal Out-Degree}} & \multicolumn{2}{c|}{\textbf{Static Out-Degree}} \\ \hline
\multicolumn{1}{|c|}{\textbf{Data Sets}} & \multicolumn{1}{c|}{\textbf{\begin{tabular}[c]{@{}c@{}}Max \\ Out-Degree\end{tabular}}} & \textbf{\begin{tabular}[c]{@{}c@{}}Average \\ Out-Degree\end{tabular}} & \multicolumn{1}{c|}{\textbf{\begin{tabular}[c]{@{}c@{}}Max \\ Out-Degree\end{tabular}}} & \textbf{\begin{tabular}[c]{@{}c@{}}Average \\ Out-Degree\end{tabular}} \\ \hline
\multicolumn{1}{|l|}{\textbf{Chicago}} & \multicolumn{1}{c|}{1315} & 408 & \multicolumn{1}{c|}{17} & 3 \\ \hline
\multicolumn{1}{|l|}{\textbf{London}} & \multicolumn{1}{c|}{7948} & 675 & \multicolumn{1}{c|}{7} & 1.3 \\ \hline
\multicolumn{1}{|l|}{\textbf{Los Angels}} & \multicolumn{1}{c|}{2069} & 142 & \multicolumn{1}{c|}{7} & 1.3 \\ \hline
\multicolumn{1}{|l|}{\textbf{Madrid}} & \multicolumn{1}{c|}{2940} & 425 & \multicolumn{1}{c|}{8} & 1.5 \\ \hline
\multicolumn{1}{|l|}{\textbf{Newyork}} & \multicolumn{1}{c|}{1480} & 521 & \multicolumn{1}{c|}{3} & 1.2 \\ \hline
\multicolumn{1}{|l|}{\textbf{Paris}} & \multicolumn{1}{c|}{83209} & 2599 & \multicolumn{1}{c|}{61} & 3 \\ \hline
\multicolumn{1}{|l|}{\textbf{Petersburg}} & \multicolumn{1}{c|}{11402} & 586 & \multicolumn{1}{c|}{22} & 1.5 \\ \hline
\multicolumn{1}{|l|}{\textbf{Sweden}} & \multicolumn{1}{c|}{17740} & 144 & \multicolumn{1}{c|}{43} & 2 \\ \hline
\multicolumn{1}{|l|}{\textbf{Switzerland}} & \multicolumn{1}{c|}{28315} & 310 & \multicolumn{1}{c|}{49} & 2 \\ \hline
\end{tabular}
}
\caption{\centering Temporal degree and static degree of public transport networks}
\label{tab:AvgOutDegree}
\end{table}

\begin{figure}[H]
\centering
\includegraphics[width=\columnwidth]{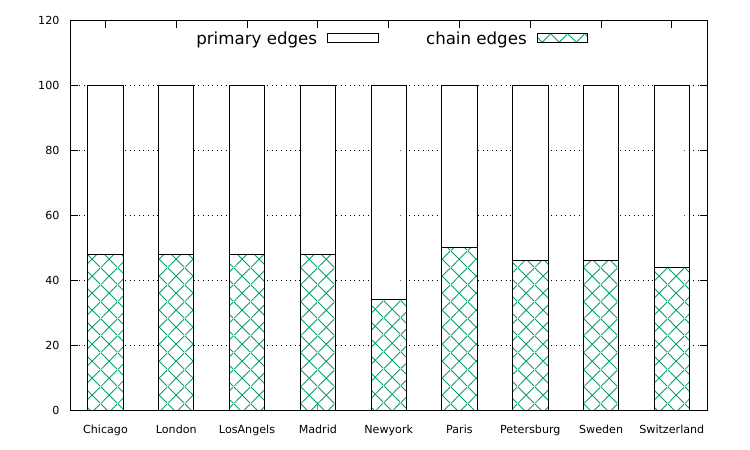}
\caption{Time spent on chain edges processing in \textsc{trg} based \textsc{eat} algorithm}
\Description{Temporal Graph}
\label{fig:chainEdgesprocessingpercentage}
\end{figure}

This motivates us to define certain dependencies between temporal edges and process only necessary dependencies to reduce the computations. Fortunately, only such important dependencies are captured in the edge-scan-dependency (\textsc{esd}) graph \cite{ESDG}. However, in their approach, edges are arranged in a special order based on the associated \textsc{esd}-graph and ignore the topological structure, to solve \textsc{eat} problem. We explore structural properties of \textsc{esd}-graph and use the topology of the \textsc{esd}-graph to solve \textsc{eat} and \textsc{fpd} problem, to overcome the drawbacks identified, and beat the existing results. Our key contributions are outlined as follows.

%This paper tackles two path problems: the earliest arrival time problem and the fastest path problem, formally defined below. These solutions aid public administrators in addressing route optimization and associated issues.Sanaz Gheibi et al., \cite{Sahani2021} introduced a memory-efficient data structure known as the Time Respecting Graph (TRG) to represent a temporal graph and proposes an algorithm to compute the fastest path duration from the given source vertex $s$ to the rest of the vertices present in the temporal graph $G$ using transformed graph $G'$ designed by them. Wu et.al.,\cite{wu2016FastEAT} proposed edge stream based algorithms to compute earliest arrival time and fastest path duration from the given source vertex to the rest of the vertices present in the graph.

%\textbf{Assumptions (if any)} non-negative integer edge weights, No delays, delays or uncertain events

% \begin{itemize} [label={}, left=0pt, itemsep=0.5pt]
\begin{itemize} [left=0pt, itemsep=0.5pt]

\item \textit{Characterization:} We introduce the notion of useful dominating paths and show that there is a one to one mapping between useful-dominating-paths in a temporal graph $G$ and paths in the transformed graph $\Tilde{G}$ of $G$. This helps to eliminate traversal of many unnecessary paths in our algorithms. %Also,  time validation is not required during the traversal of $\Tilde{G}$, as every path in $\Tilde{G}$ corresponds to a time respecting path in $G$.

\item \textit{Data-Structures:} We use the topology of edge-scan-dependency graph data structure to avoid the time validation computation, during the traversal of $\Tilde{G}$, as every path in $\Tilde{G}$ corresponds to a time respecting path in $G$.
%We utilized the Edge Scan Dependency Graph data structure to store public transportation data and extensively leveraged its features. 

\item \textit{Algorithms:} For a temporal graph on $n$ vertices and $m$ edges, we devise $O(m+n)$ algorithms to solve the fastest path duration and earliest arrival time problems. In our algorithms, we make sure that every edge is processed at most once. % from the given source vertex to the rest of the vertices in the temporal graph of type public transportation.

\item \textit{Speedup:} We run our algorithms on the nine real-time public transportation data sets. In practice, we avoid processing many edges.  Thus, the fastest path duration algorithm obtains a 34-fold speed up and the earliest arrival path duration algorithm,  obtains a 183-fold speed up, compared with the state-of-the-art algorithms.
\end{itemize}

% \textcolor{blue}{Novelty}: identify right property.... choose right data structure, right algorithm, 

% counting temporal paths is #P-hard \cite{NumberofPaths_EnrightMM23}
\section{Preliminaries}

A temporal graph is a weighted directed graph, in which multiple edges exist between the same pair of vertices, and the edges associate with time information.
We use $G$ to denote a temporal graph. Let $V(G)$ and $E(G)$ denote the set of vertices and set of edges, respectively in $G$.
An edge in a temporal graph $G$ is represented by a 4-tuple $(u,v,t,\lambda)$, where $u$ and $v$ are the end vertices of the edge, $t$ and $\lambda$ are positive integers, $t$ denotes the departure time at $u$, and $\lambda$ denotes the duration time from $u$ to $v$; $t+\lambda$ is considered as arrival time at $v$. 
A sequence $(e_1, e_2, \ldots, e_k)$ of edges in $G$ is a \emph{time respecting path}, if  it joins a sequence of vertices  and the departure time of every edge is at least the arrival time of the previous edge.
Let $P$ be a time respecting in $G$. Then, $\dep(P)$ and $\arr(P)$ denote the departure and arrival times, respectively. 
Further, the journey time of $P$ is defined as $\arr(P) - \dep(P)$.
Let $s$ and $z$ be two vertices in $G$. A time-respecting path from $s$ whose departure time is at least the given ready time $rt$, and the arrival time at $z$ is minimum is referred to as an \emph{earliest arrival path}. Similarly, a time-respecting path from $s$ to $z$ is a fastest path if its journey time is minimum over all the paths from $s$ to $z$. The arrival time of the earliest arrival path and the journey time of the fastest path are known as \emph{earliest arrival time} and \emph{fastest path duration}. In Figure~\ref{fig:pathexamplelong}, for the given ready time $3$, the sequence of (3,1), (5,4),(9,3) edges forms an earliest arrival path from $v_1$ to $v_9$, whose arrival time is $12$ and journey time is $9$. Further, the sequence of (9,3),(12,2),(14,1) edges is a fastest  path from $v_1$ to $v_9$, because its journey time is $6$. In the \textsc{eat} and \textsc{fpd} problems, the goal is to compute the earliest arrival times and fastest path durations from a source vertex to all the vertices.

% \textcolor{red}{Frequently used Notations table}

% For a temporal graph $G$, we use $\kappa(G)$ to denote the maximum number of incoming edges to $e=(u,v,t,\lambda)$ with distinct arrival times that lie between $t$ and $t+\lambda$ over all the temporal edges $e$ in $G$. A path $P$ from $u$ to $v$ in a temporal graph is a \emph{fastest path}, if the journey time of $P$ is minimum over all  paths from $u$ to $v$. The journey time of a path $P$ is defined as the sum of the total travelling time and total waiting time involved in $P$. 

\begin{figure}
\centering
\includegraphics[width=\columnwidth]{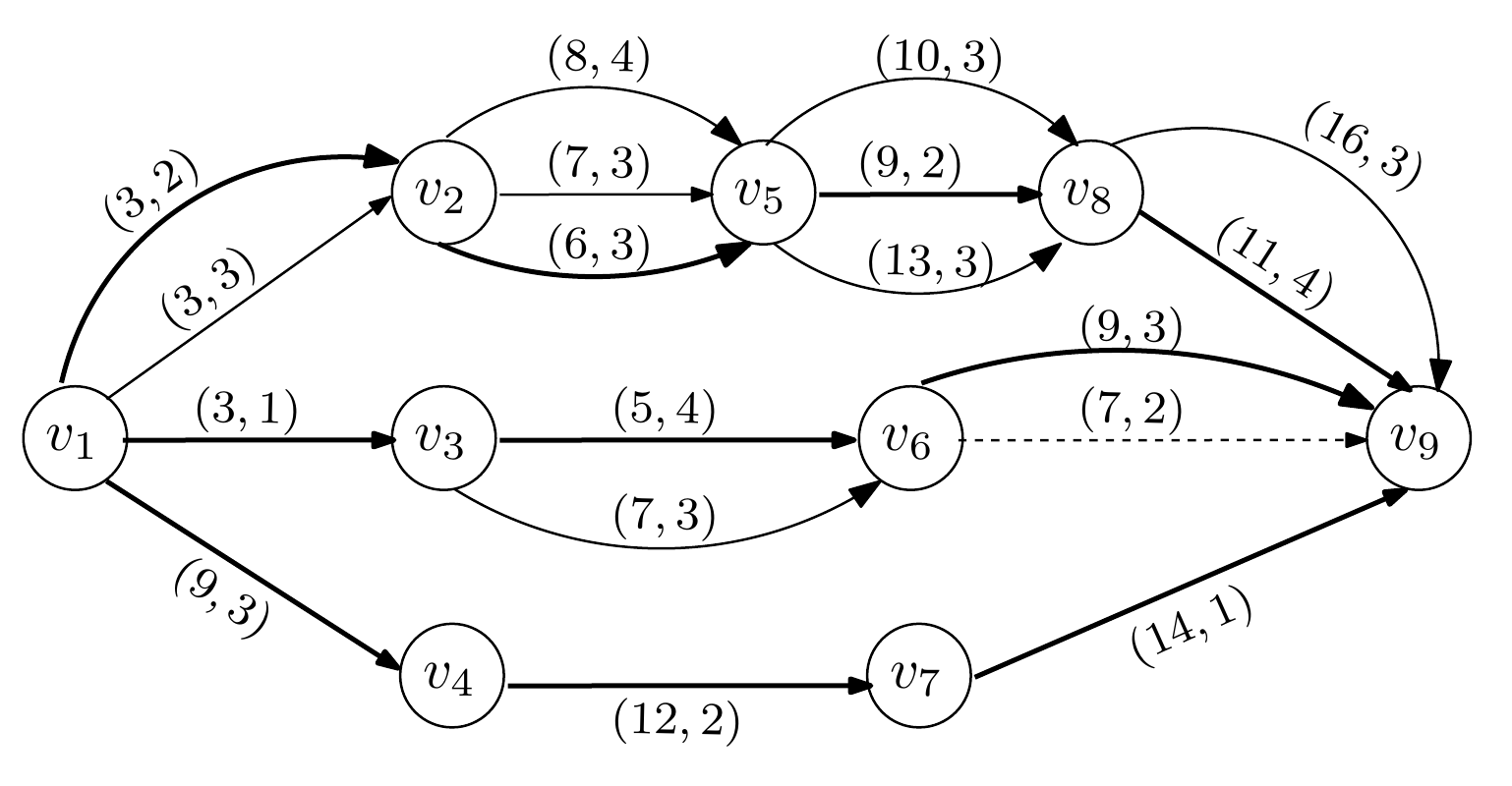}
\caption{A temporal graph}
\Description{Temporal Graph}
\label{fig:pathexamplelong}
\end{figure}

% \textcolor{blue}{In Figure~\ref{fig:pathexample}, the sequence of (3,5), (8,3) edges is an earliest arrival path from $v_1$ to $v_5$, with respect to the given ready time $3$. Further, the sequence of (5,4), (9,3) edges is a fastest  path from $v_1$ to $v_5$. In the \textsc{eat} and \textsc{fpd} problems, the goal is to compute the earliest arrival times and fastest path durations from a source vertex to all the vertices.} 

% \begin{figure}[H]
% \centering
% \includegraphics[scale=0.2]{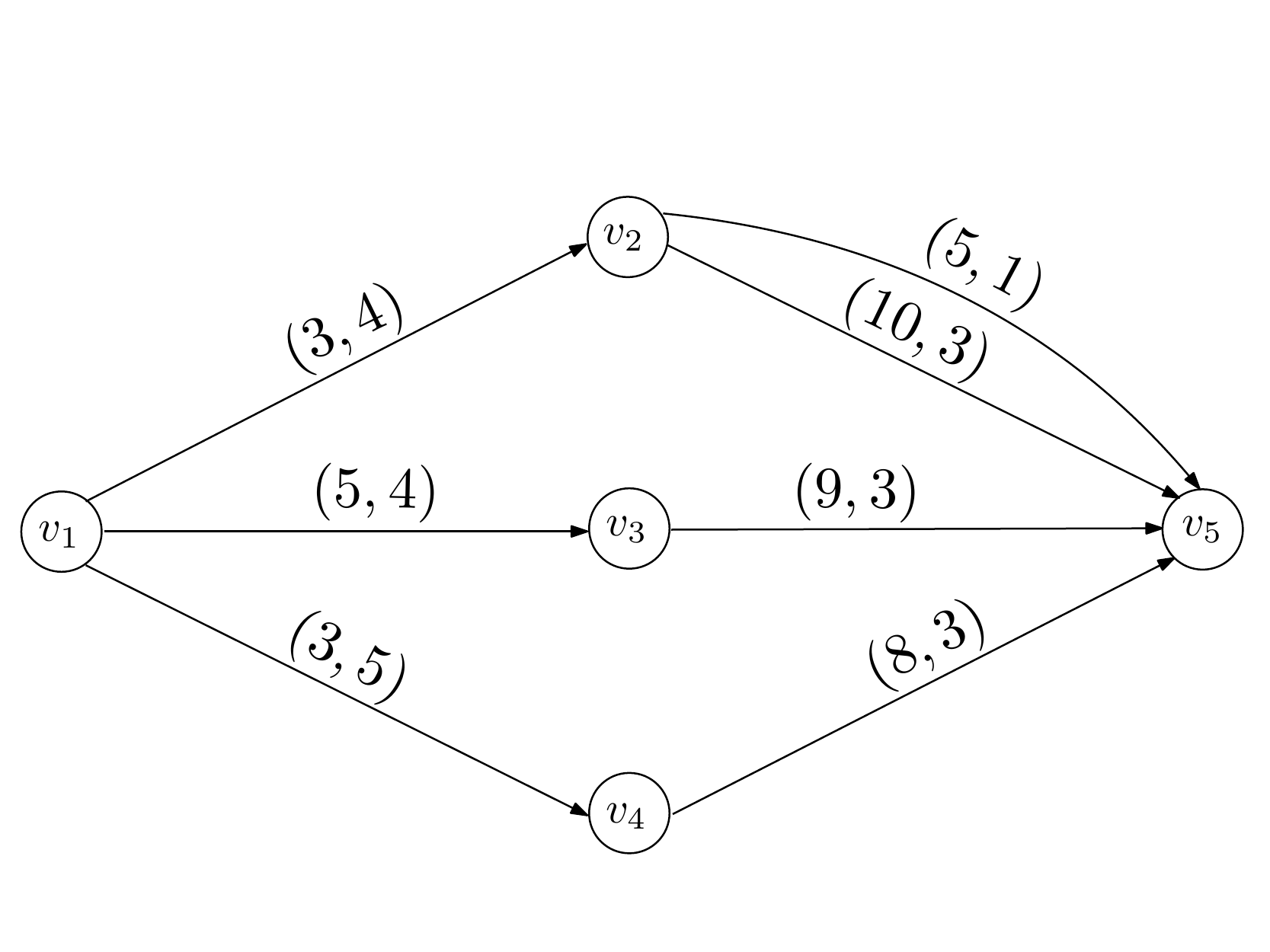}
% \caption{Temporal Graph}
% \label{fig:pathexample}
% \end{figure}

% \begin{figure}
%   \centering
%   % \includesvg{Images/temporal.svg}
%   \includegraphics{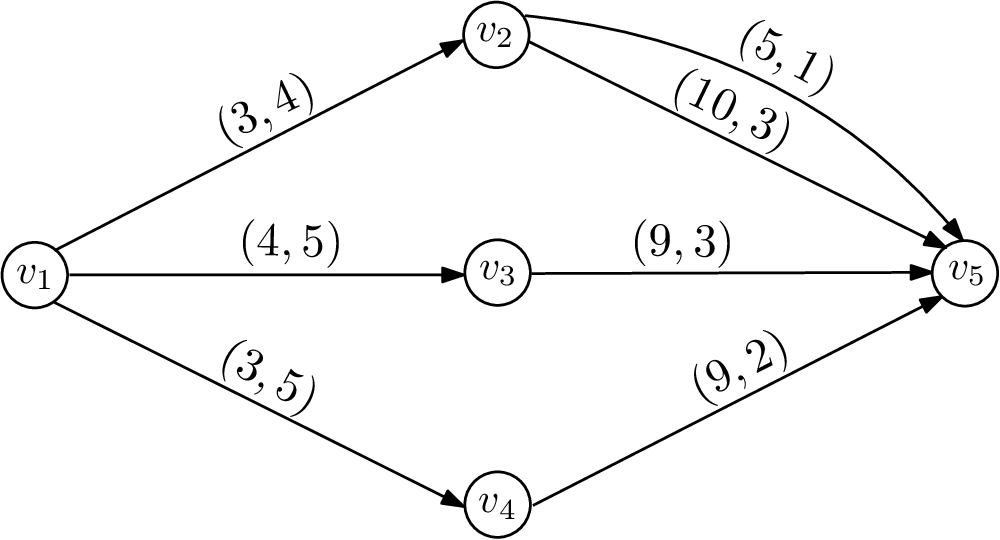}
%   \caption{Temporal Graph}
%   \label{pathexampl}
% \end{figure}

\section{Useful Dominating Paths and ESDG } \label{Section_Graph_Transformation}

In this section, we first characterize the earliest arrival paths and fastest paths in temporal graphs. In other words, we define the notion of \emph{useful dominating paths} and show that every earliest arrival path and every fastest path is a useful dominating path. Further, we use the known transformation to convert a temporal graph to an equivalent directed acyclic graph(\textsc{dag})\cite{ESDG}. Later, we prove that all the earliest arrival paths and fastest paths are preserved in the transformed \textsc{dag}, which helps to design efficient algorithms. 

%A path in static graphs can be represented by a sequence of either vertices or edges. In temporal graphs, a sequence of vertices and a sequence of connections need not represent the same, due to the existence of multiple connections between two vertices. 
Now, we describe about useful dominating paths. 
A sequence $(v_1, v_2, v_3, \ldots, v_k)$ of vertices in $G$ is a \emph{route}, if there is an edge between every two consecutive vertices in the sequence.
For a path $P = (e_1, e_2, \ldots, e_k)$, and a route $r=(v_1, v_2, v_3, \ldots, v_{k+1})$,  we say that $P$ \emph{goes through} $r$, if for each $i$, $1\leq i\leq k$, left and right end vertices of $e_i$ are $v_i$ and $v_{i+1}$, respectively.
%$\leftvertex(e_i) = v_i$, and $\rightvertex(e_k) = v_{k+1}$.
A sub-path of a path is a \emph{prefix path} if both the paths start at the same vertex. Let $\mathbb{P}(s,z,r,t)$ denote the set of paths that depart at time $t$ from  $s$,  go through the route $r=(s=u_1, \ldots, u_k=z)$ of vertices and reach $z$.
A path $P$ in $\mathbb{P}(s,z,r,t)$ is a \emph{dominating path} if, for every path $Q$ in $\mathbb{P}(s,z,r,t)$, 
%from $s$ to $z$ that departs at time $t$ on route $r$, and 
$\arr(P) \leq \arr(Q)$.
% the arrival time at the destination for $P$ is greater than or equal to the arrival time for $Q$.
A dominating path $P$ in $\mathbb{P}(s,z,r,t)$  is a \emph{useful dominating path} if 
for every prefix path $P'$ of $P$, $P'$ is a dominating path.

In the route $r = (v_1, v_2, v_5, v_8, v_9)$ illustrated in Figure~\ref{fig:pathexamplelong}, the sequence of edges $(3,2),(6,3),(9,2),(11,4)$ constitutes a path denoted as $P$. Additionally, the edges $(3,3),(6,3),(9,2),(11,4)$ form a distinct path referred to as $Q$, and the edges $(3,4),(7,3),(10,3),(16,3)$ form yet another path called $R$.
 Since, $\arr(R) > \arr(P)$, $R$ does not meet the criteria of a dominating path. 
Upon considering the prefix paths $P'$ and $Q'$, of $P$ and $Q$, respectively on route $r = (v_1,v_2)$, it becomes apparent that $\arr(Q') > \arr(P')$. Consequently, it is established that $P$ is a useful dominating path, whereas $Q$ is dominating, but not a useful dominating path.
 
% \textcolor{blue}{In Figure~\ref{fig:usefulDominatingPath}, the sequence of $(e_1, e_2, \ldots, e_k)$ edges forms a useful dominating path $P$ on the route $r = (v_1, v_2, v_3, \ldots, v_{k+1})$. In both cases, $Q$ is not a useful dominating path. In case 1, path $Q$ is not a dominating path because $\arr(P) < \arr(Q)$. In case 2, for the prefix paths $P'$ and $Q'$ on route $r = (v_1, v_2, v_3)$, $\arr(P') < \arr(Q')$.}

% \begin{figure}[H]
% \centering
% \includegraphics[width=\columnwidth]{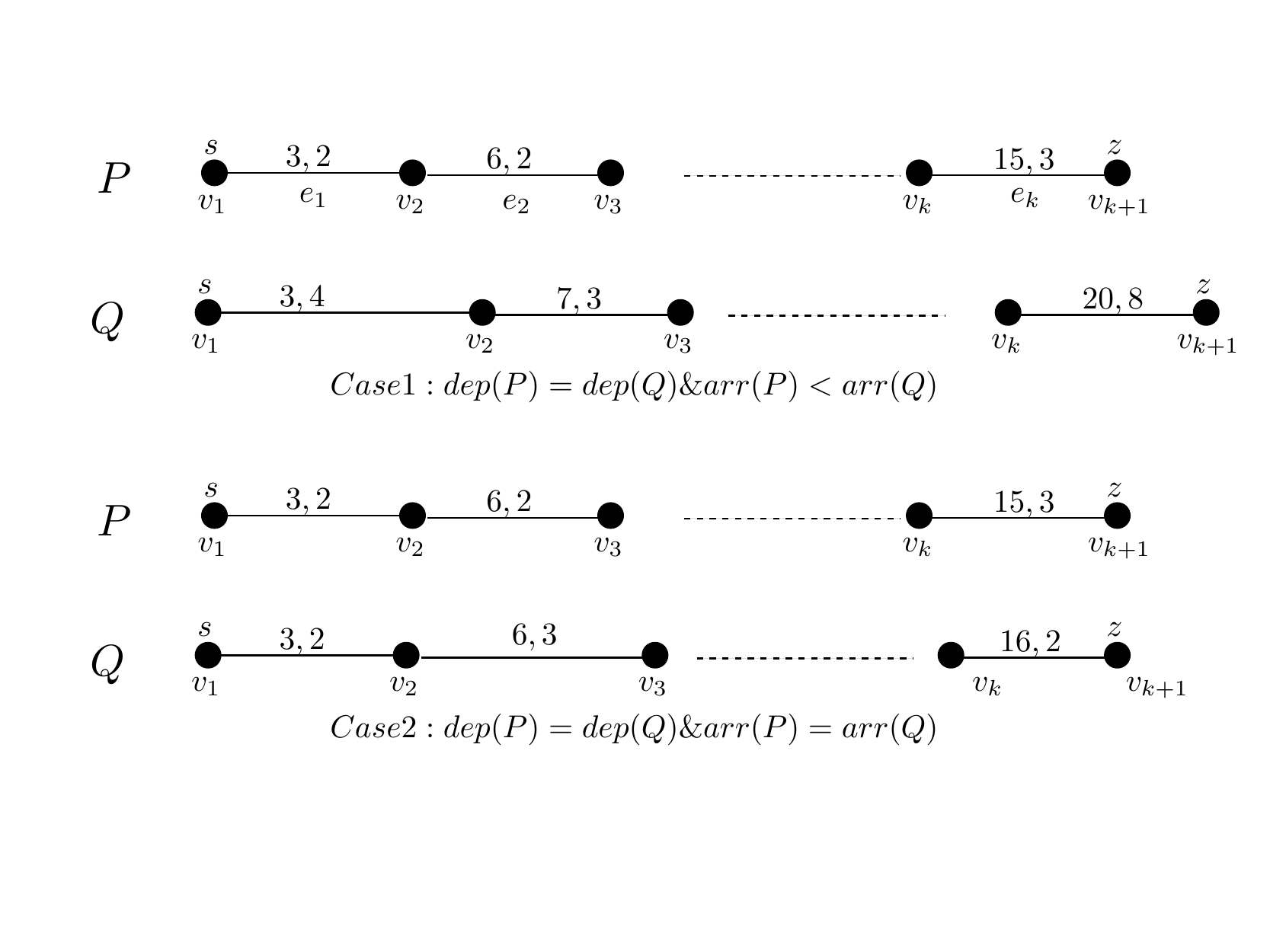}
% \caption{useful Dominating Path}
% \label{fig:usefulDominatingPath}
% \end{figure}

%from $s$ to $z$, departing at time $t$ on route $r$ is a \emph{useful dominating path} if, for every prefix path $Q$ of $P$ is also a dominating path.

% \noindent \subsection{Notation and Terminology:}
%  % - route is a sequence of vertices
%  % - journey is a sequence of connections
%  % - journey on route: going through the route
%  % - dominating journey on a route
%  % - useful dominating journey on a route

\noindent
\begin{lemma}
\label{lemmaUsefulDominatingPath}
If there exists a path from $s$ to $z$ that departs at time $t$ on a route $r$, then there exists a  useful dominating path from $s$ to $z$ that starts at time $t$ on the route $r$.
%Let P be a path from s to z. Then there exists a  useful dominating path Q from $s$ to $z$ such that route(Q)=route(P), dep(Q)=dep(P) and arr(Q) \leq arr(P).
%Let $r$ be a route from $x_1$ to $x_k$. If $P_{x_1,x_k,r}(t) \neq \emptyset$, then there exists a useful dominating path from vertex $x_1$ to $x_k$ at time $t$ on route $r$.
\end{lemma}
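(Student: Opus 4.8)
The plan is to construct the desired useful dominating path greedily along the route $r = (s = u_1, u_2, \ldots, u_k = z)$, always committing to the locally earliest arrival, and then to prove by induction that this greedy choice is in fact globally prefix-optimal. Concretely, I would define arrival values $a_1, a_2, \ldots, a_k$ together with a path $P^g$ as follows. Set $a_1 = t$. For the first edge I would choose, among all edges $(u_1, u_2, t, \lambda)$ departing exactly at $t$, one of minimum duration, and let $a_2 = t + \lambda$ be its arrival time. For each subsequent position $i \geq 2$, I would choose among all edges $(u_i, u_{i+1}, t_i, \lambda_i)$ with $t_i \geq a_i$ one that minimizes $t_i + \lambda_i$, and set $a_{i+1} = t_i + \lambda_i$. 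The edges selected in this way form $P^g$, which by construction departs at $t$ and goes through $r$.

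First I would argue that the construction never gets stuck, i.e.\ that an admissible edge is always available. Using the hypothesis, fix any witness path $P^* \in \mathbb{P}(s, z, r, t)$, and let $P^*_i$ denote its prefix up to $u_i$. I would maintain, by induction on $i$, the inequality $a_i \leq \arr(P^*_i)$, with the base case $a_1 = t = \arr(P^*_1)$. In the inductive step, the edge that $P^*$ uses from $u_i$ departs at some $t_i^* \geq \arr(P^*_i) \geq a_i$, so it is an admissible candidate for the greedy choice at step $i$; hence the greedy selection set is nonempty, $P^g$ can proceed, and $a_{i+1} \leq t_i^* + \lambda_i^* = \arr(P^*_{i+1})$. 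This simultaneously shows that $P^g$ reaches $z$ and that $a_i$ never exceeds the arrival of any feasible path.

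The heart of the proof is the invariant that $a_i$ equals the \emph{minimum} arrival time over all paths in $\mathbb{P}(s, u_i, r_i, t)$, where $r_i = (u_1, \ldots, u_i)$ is the length-$i$ prefix route; equivalently, the greedy prefix $P^g_i$ is a dominating path for $r_i$. One direction, $a_i \leq$ every feasible arrival, is exactly the bound of the previous paragraph applied to an arbitrary feasible path; the reverse holds because $P^g_i$ is itself feasible with arrival $a_i$. The underlying structural fact I would isolate is that any path in $\mathbb{P}(s, u_{i+1}, r_{i+1}, t)$ is a path in $\mathbb{P}(s, u_i, r_i, t)$ followed by a single edge $(u_i, u_{i+1}, t_i, \lambda_i)$ with $t_i$ at least its arrival at $u_i$, so the set of attainable last edges depends on the chosen prefix only through that arrival time; since the greedy prefix realizes the smallest possible such value $a_i$, it can be extended by every edge with $t_i \geq a_i$, and the minimum of $t_i + \lambda_i$ over these is precisely $a_{i+1}$.

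The step I expect to be the main obstacle is exactly this exchange property: showing that committing to the earliest arrival at $u_i$ never forecloses a future option, so that prefix-optimality propagates from $u_i$ to $u_{i+1}$. Once the invariant is in hand, the conclusion is immediate. Every prefix $P^g_i$ of $P^g$ has arrival equal to the minimum value $a_i$ and is therefore a dominating path, which is precisely the definition of a useful dominating path; and $P^g$ itself lies in $\mathbb{P}(s, z, r, t)$ and departs at $t$, as required.
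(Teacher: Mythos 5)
Your proof is correct, and it proves exactly what the lemma asserts, but it takes a genuinely different route from the paper's. The paper argues top-down by induction on the route length $k$: it starts from an arbitrary dominating path $P$ on $r$ (one exists because the hypothesis makes $\mathbb{P}(s,z,r,t)$ nonempty and the set is finite), invokes the induction hypothesis to obtain a useful dominating path $Q$ on the prefix route $(u_1,\ldots,u_{k-1})$, and splices $Q$ in place of the corresponding prefix $Q'$ of $P$; the splice is feasible because $\arr(Q)\leq\arr(Q')$, so the last edge of $P$ still departs no earlier than the new prefix arrives, and the spliced path keeps the arrival time of $P$, hence remains dominating while all its proper prefixes are dominating by the usefulness of $Q$. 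You instead build the path bottom-up by a greedy rule (take a feasible continuation edge of minimum arrival at every step) and carry the invariant that each greedy prefix is a dominating path for its prefix route. Both arguments pivot on the same exchange property---arriving at an intermediate vertex no later never forecloses a continuation, since the attainable last edges depend on the prefix only through its arrival time---but you isolate and prove this explicitly, whereas the paper leaves it implicit in the unargued closing sentence that ``the resultant dominating path is useful.'' Your version buys three things: it is constructive, so the existence of a dominating path on $r$ is a conclusion rather than a prerequisite; it delivers domination of all prefixes in a single forward sweep, which is the full definition of usefulness; and your greedy selection rule (no feasible continuation arrives strictly earlier) is precisely the dependency condition defining the edges of $\Tilde{G}$, so your construction anticipates the correspondence established in Lemma~\ref{lemmaESDGTransformation1}. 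What the paper's splice buys is brevity once the exchange property is granted. One cosmetic remark: your base case treats the empty prefix at $u_1$ as having arrival $t$; stating that convention explicitly (or starting the induction at $i=2$, where $a_2$ is defined directly from edges departing exactly at $t$) would make the equation $a_1=\arr(P^*_1)$ unambiguous.
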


% \noindent
\begin{proof}
Let  $P$ be a dominating path that departs at $t$, on route $r=(s=u_1, \ldots, u_k=z)$. We claim that there exists a useful dominating path on $r$. 
In the base case $k=1$, and thus the claim holds true, because the number of prefix paths of $P$ is one.
By the induction hypothesis, the claim holds true for $k-1$. In other words, if there is a path from $u_1$ to $u_{k-1}$ that starts at time $t$ on route $r$, then there exists a useful dominating path $Q$ from $u_1$ to $u_{k-1}$ that departs at $t$ on route $r$. Let $Q'$ be the sub path of the dominating path $P$ from $u_1$ to $u_{k-1}$. 
Now we shall replace $Q'$ of $P$ with $Q$ and the resultant dominating path is useful.
\end{proof}

If all the prefix paths of a dominating path on a route are dominating, then a prefix path of any prefix path is dominating. This results the following corollary. 
\begin{corollary}
\label{corollaryPrefixOfUsefulDominatingPath}
Every prefix path of a useful dominating path is a useful dominating path.
\end{corollary}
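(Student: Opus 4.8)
The plan is to unwind the definition of a useful dominating path and lean entirely on the transitivity of the prefix relation, exactly as hinted in the remark preceding the statement. Let $P$ be a useful dominating path and let $P'$ be an arbitrary prefix path of $P$; the goal is to show that $P'$ is itself a useful dominating path. By definition, this amounts to verifying two things: that $P'$ is a dominating path, and that every prefix path of $P'$ is a dominating path.

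First I would dispatch the first requirement. Since $P$ is a useful dominating path, its defining property guarantees that every prefix path of $P$ is dominating. As $P'$ is a prefix path of $P$, it is in particular dominating, so the first condition holds immediately.

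Next I would handle the second requirement, which is the only substantive point. Take any prefix path $P''$ of $P'$. The key observation is that the prefix relation is transitive: $P''$ starts at the same source vertex as $P'$ and is an initial segment of it, while $P'$ starts at the same source vertex as $P$ and is an initial segment of it; hence $P''$ shares its start vertex with $P$ and is an initial segment of $P$, i.e.\ $P''$ is again a prefix path of $P$. Invoking the useful-dominating property of $P$ once more, $P''$ is dominating. Since $P''$ was arbitrary, every prefix path of $P'$ is dominating, and together with the first part this establishes that $P'$ is a useful dominating path.

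I do not expect any real obstacle here: the whole argument is a definitional chase resting on the transitivity of the prefix relation, and the only place deserving care is to make that transitivity precise by checking that the shared-start-vertex condition built into the definition of a prefix path is inherited through composition. No appeal to the arrival-time structure of $\mathbb{P}(s,z,r,t)$ is needed beyond the bare definition of domination, so the proof stays at the level of the definitions throughout.
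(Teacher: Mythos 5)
Your proof is correct and takes exactly the same route as the paper, whose entire justification is the one-line remark preceding the corollary that a prefix path of a prefix path is again a prefix path, so every prefix of $P'$ is dominating by the useful-dominating property of $P$. You have simply written out that transitivity argument in full detail, including the (easy but worth-checking) point that the shared-start-vertex condition is preserved under composition of prefixes.
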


An  earliest arrival path is a dominating path on a route. Similarly, a fastest path is a dominating path on a route. Also, the departure times (and arrival times) of both dominating and useful dominating paths on the same route are equal. From these observations along with Lemma~\ref{lemmaUsefulDominatingPath},  we have the following corollaries. These corollaries helps to design a pruning strategy, by which exploring the useful dominating paths in temporal graphs will be sufficient to find earliest arrival and fastest paths.

\begin{corollary}
\label{corollaryEPisUseful}
For a temporal graph $G$, a ready time $rt$, and a source vertex $s$, let $P$ be an earliest arrival path from $s$ to $z$ such that $dep(P) \geq rt$.
% that starts at time t.
Then there exists a  useful dominating path $Q$ from $s$ to $z$ such that $route(Q)=route(P)$, $\dep(Q)=\dep(P)$ and $\arr(Q)=arr(P)$.
\end{corollary}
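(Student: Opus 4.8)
The plan is to invoke Lemma~\ref{lemmaUsefulDominatingPath} directly on the route and departure time of $P$, and then reconcile the arrival times using the definition of a dominating path. First I would set $r = route(P)$ and $t = \dep(P)$, and observe that since $P$ is an earliest arrival path from $s$ to $z$, it is in particular a path from $s$ to $z$ that departs at time $t$ on the route $r$. This is exactly the hypothesis required by Lemma~\ref{lemmaUsefulDominatingPath}, so the lemma yields a useful dominating path $Q$ from $s$ to $z$ that departs at time $t$ on route $r$. This immediately gives $route(Q) = r = route(P)$ and $\dep(Q) = t = \dep(P)$, settling two of the three required equalities.

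The remaining task is to show $\arr(Q) = \arr(P)$. Here I would rely on the observation stated just before the corollary, namely that an earliest arrival path is itself a dominating path on its route. Thus both $P$ and $Q$ belong to $\mathbb{P}(s,z,r,t)$ and are dominating paths in this set: $P$ because it is an earliest arrival path, and $Q$ by the definition of a useful dominating path (which is, in particular, a dominating path). By the definition of dominating path, each has arrival time no larger than that of any path in $\mathbb{P}(s,z,r,t)$; applying this to the pair $(P,Q)$ gives $\arr(P) \le \arr(Q)$ and $\arr(Q) \le \arr(P)$, hence $\arr(Q) = \arr(P)$.

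I do not expect any serious obstacle, since the statement is essentially a specialization of Lemma~\ref{lemmaUsefulDominatingPath} to the case where the starting path is an earliest arrival path. The only point that needs care is making explicit that ``earliest arrival path'' entails ``dominating path on a fixed route departing at a fixed time,'' so that $P$ and $Q$ can legitimately be compared within the same set $\mathbb{P}(s,z,r,t)$; this identification is precisely what forces their arrival times to coincide. Once it is in place, the three claimed equalities follow without any further computation.
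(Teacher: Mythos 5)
Your proposal is correct and matches the paper's own argument: the paper likewise derives the corollary from Lemma~\ref{lemmaUsefulDominatingPath} together with the observations that an earliest arrival path is a dominating path on its route and that dominating and useful dominating paths on the same route (with the same departure time) must have equal arrival times. Your write-up merely makes explicit, via the two-sided inequality within $\mathbb{P}(s,z,r,t)$, what the paper states as a remark preceding the corollary.
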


\begin{corollary}
\label{corollaryFPisUseful}
Let $P$ be a fastest path from $s$ to $z$ in a temporal graph $G$. Then there exists a  useful dominating path $Q$ from $s$ to $z$ such that $route(Q)=route(P)$, $\dep(Q)=\dep(P)$ and $\arr(Q)=\arr(P)$.
%Let P be an earliest arrival path from s to z. Then there exists a  useful dominating path Q from $s$ to $z$ such that route(Q)=route(P), dep(Q)=dep(P) and arr(Q)=arr(P).
\end{corollary}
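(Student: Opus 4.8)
The plan is to derive this directly from Lemma~\ref{lemmaUsefulDominatingPath}, using the bridging observation (stated just before the corollaries) that a fastest path is a dominating path on its route and that dominating and useful dominating paths on the same route share the same departure and arrival times. First I would fix $r = route(P)$ and $t = \dep(P)$, so that $P \in \mathbb{P}(s,z,r,t)$. The crucial preliminary step is to confirm that $P$ is in fact a dominating path in $\mathbb{P}(s,z,r,t)$: suppose not, and let $R \in \mathbb{P}(s,z,r,t)$ satisfy $\arr(R) < \arr(P)$. Since $R$ and $P$ share the departure time $t$, the journey time of $R$ equals $\arr(R) - t < \arr(P) - t$, the journey time of $P$, contradicting that $P$ is a fastest path from $s$ to $z$. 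Hence $P$ is dominating on $r$.

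Next, because $P$ witnesses the existence of a path from $s$ to $z$ that departs at time $t$ on route $r$, Lemma~\ref{lemmaUsefulDominatingPath} supplies a useful dominating path $Q$ from $s$ to $z$ that departs at $t$ on route $r$. This immediately gives $route(Q) = r = route(P)$ and $\dep(Q) = t = \dep(P)$. To obtain the remaining equality, I would invoke the observation that dominating and useful dominating paths on the same route have equal arrival times: both $P$ and $Q$ are dominating members of $\mathbb{P}(s,z,r,t)$, so each attains the minimum arrival time over that set, forcing $\arr(Q) = \arr(P)$. This closes all three required equalities.

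The main obstacle — indeed the only nontrivial point — is the reduction in the first step, namely translating the optimality of a fastest path, which is a statement about the \emph{difference} $\arr(P) - \dep(P)$, into a statement about arrival time alone. The key is that once the departure time is pinned to $t$ and held fixed across the route class $\mathbb{P}(s,z,r,t)$, minimizing journey time becomes equivalent to minimizing arrival time; this is exactly the feature that distinguishes the argument from the earliest-arrival case of Corollary~\ref{corollaryEPisUseful}, where arrival time is minimized directly. After this reduction the fastest path is forced to be dominating and Lemma~\ref{lemmaUsefulDominatingPath} applies verbatim, with everything that follows being bookkeeping on the equalities of route, departure, and arrival.
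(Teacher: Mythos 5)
Your proposal is correct and takes essentially the same route as the paper, which derives this corollary from precisely the observations you use: that a fastest path is a dominating path on its route, that dominating and useful dominating paths on the same route have equal departure and arrival times, and Lemma~\ref{lemmaUsefulDominatingPath}. The only difference is that you explicitly verify the fixed-departure reduction from journey time to arrival time (showing $P$ is dominating), a step the paper asserts without proof.
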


%\begin{lemma} \label{lemmaFPisUseful} Let $P$ and $Q$ be dominating and useful dominating paths, respectively from $s$ to $z$ that start at a departute time $t$ on a route $r$. Then, $\arr(P) = \arr(Q)$. \end{lemma} \begin{proof} Every useful dominating path on $r$ is a dominating path on $r$, and thus we have $\journey(P) = \journey(Q)$. Similarly, a fastest path on $r$ is  a dominating path on $r$, and hence $\journey(P) = \journey(R)$. \end{proof}

%For a temporal graph, the ESDG graph $\Tilde{G}$ is defined as follows: 

Moving forward, we describe a graph transformation to preserve dominating paths. A temporal graph $G$ can be converted to an equivalent edge scan dependency graph (\textsc{esdg}) $\Tilde{G}$, in which all the edges of $G$ are treated as vertices or \emph{nodes} in $\Tilde{G}$, and  edges or \emph{dependencies} are added between  the nodes based on certain constraints. Ni et al. have developed this transformation to solve \textsc{eat} in the parallel setting \cite{ESDG}. In this work, we propose one to one mapping between useful dominating paths in $G$ to paths in $\Tilde{G}$ in the following lemma, and further use this characterization to design efficient algorithms.

Now, we shall look at the transformation from a temporal graph to the corresponding \textsc{esdg}. For a temporal graph $G = (V(G), E(G))$, the  node set $V(\Tilde{G})$, and the edge set
$E(\Tilde{G})$ are depended as follows: $V(\Tilde{G}) = \{ v_{e} \mid e \in E(G) \}$, 
$E(\Tilde{G}) = \{ (v_{e}, v_{f}) \mid e=(u,v,\alpha, \omega), f=(v,w,\alpha', \omega') \in E(G)$, and  no edge $(v,w,\alpha'', \omega'')$ exists such that $\alpha''\geq \omega$, and $\omega''<\omega'$ $\}$.
For each edge $e =(u,v,t,\lambda)$ in $G$, 
there is a node $v_e$ in $\Tilde{G}$, and we define four values namely left vertex, right vertex, departure time and arrival time of node $v_e$ as follows: 
$\leftvertex(v_e) = u$, $\rightvertex(v_e)=v$, $\dep(v_e)=t$, $\arr(v_e)=t+\lambda$. For the graph shown in Figure~\ref{fig:pathexamplelong}, the corresponding edge-scan-dependency graph is illustrated in Figure~\ref{Fig:ESDG}, in which each node of $\Tilde{G}$ is associated with a departure time and an arrival time.

\begin{figure}[H]
  \centering
  \includegraphics[width=\columnwidth]{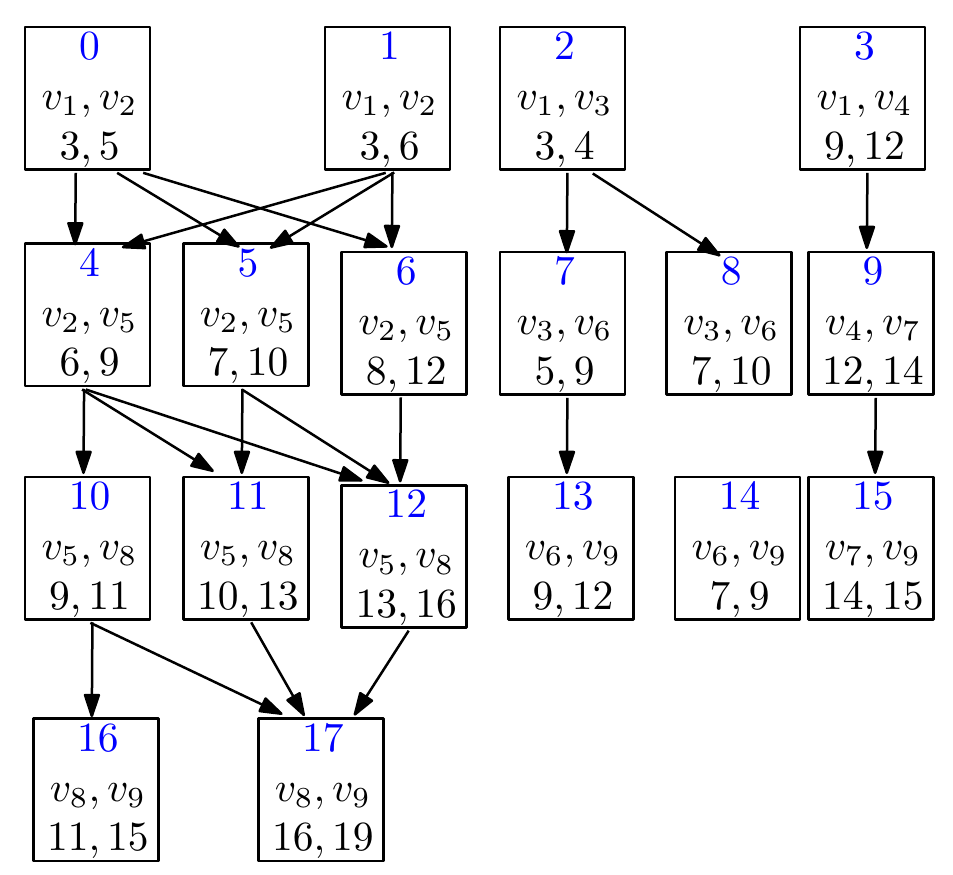}
  \caption{An edge-scan-dependency graph}
  \label{Fig:ESDG}
  \Description{Temporal Graph}
\end{figure}

\begin{lemma} 
\label{lemmaESDGTransformation1}
%Let $r=(v_1, v_2, \ldots, v_k)$ be a route in $G$. Let $P$ be a journey on $r$. Let $e$ and $f$ are the first and last connections of $P$. If $P$ is useful dominating, then there is a path from $v_e$ to $v_f$ in $\Tilde{G}$. If $P$ is not useful dominating, then there is no path from $v_e$ to $v_f$ in $\Tilde{G}$. \textcolor{red}{refine later} \noindent There is a useful dominating path that connects $e$ and $f$ on a route in $G$ if and only if there is a path from $v_e$ to $v_f$ in $\Tilde{G}$.
A sequence $e_1, \ldots, e_k$ of $k$ edges  in $G$ is a useful dominating path  if and only if a sequence $v_{e_1}, \ldots, v_{e_k}$ of $k$ vertices  in $\Tilde{G}$ is a path.
Further, the journey time of the path from $e_1$ to $e_{{k-1}}$ in $G$ is equal to the journey time of the path from $v_{e_1}$ to $v_{e_{k-1}}$ in $\Tilde{G}$.

\end{lemma}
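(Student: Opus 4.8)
The plan is to prove both implications simultaneously by induction on the length $k$ of the edge sequence, leaning on the node-value identities $\leftvertex(v_e)=u$, $\rightvertex(v_e)=v$, $\dep(v_e)=t$, $\arr(v_e)=t+\lambda$ that tie each node of $\Tilde{G}$ to its originating edge $e=(u,v,t,\lambda)$. The base case records that a single edge/node carries no adjacency constraint, and the inductive step reduces, in both directions, to a single claim: given that the length-$(k-1)$ prefix is a useful dominating path and corresponds to a path $v_{e_1},\ldots,v_{e_{k-1}}$ in $\Tilde{G}$, the pair $(e_{k-1},e_k)$ extends the useful dominating path if and only if $(v_{e_{k-1}},v_{e_k})$ is a dependency edge of $\Tilde{G}$. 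Corollary~\ref{corollaryPrefixOfUsefulDominatingPath} is what lets me peel off the last edge on the temporal-graph side and invoke the induction hypothesis.

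For the forward direction I would start from a useful dominating path $(e_1,\ldots,e_k)$ on a route $r=(u_1,\ldots,u_{k+1})$ departing at $t$. Adjacency ($\rightvertex(v_{e_{k-1}})=\leftvertex(v_{e_k})$) and the time-respecting inequality $\dep(e_k)\ge\arr(e_{k-1})$ are immediate from the definitions. The substance is the ``no earlier competitor'' clause of the dependency: I would argue by contradiction that if an edge $g$ from $u_k$ to $u_{k+1}$ had $\dep(g)\ge\arr(e_{k-1})$ and $\arr(g)<\arr(e_k)$, then appending $g$ to the dominating prefix $(e_1,\ldots,e_{k-1})$ yields a path in $\mathbb{P}(u_1,u_{k+1},r,t)$ arriving strictly before $\arr(e_k)$, contradicting that $(e_1,\ldots,e_k)$ is dominating. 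The enabling fact is that the dominating prefix reaches $u_k$ at the globally minimum arrival time over route $r$ and departure $t$, so the threshold $\arr(e_{k-1})$ appearing in the dependency condition is exactly the earliest time any onward edge from $u_k$ could be boarded.

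For the backward direction I would take a path $v_{e_1},\ldots,v_{e_k}$ in $\Tilde{G}$; by the induction hypothesis $(e_1,\ldots,e_{k-1})$ is already a useful dominating path on $r$, and the dependency edge $(v_{e_{k-1}},v_{e_k})$ supplies both adjacency and $\dep(e_k)\ge\arr(e_{k-1})$, so $(e_1,\ldots,e_k)$ is a genuine time-respecting path on $r$. To upgrade it to dominating, I would take an arbitrary competitor $R\in\mathbb{P}(u_1,u_{k+1},r,t)$: its prefix reaches $u_k$ no earlier than $\arr(e_{k-1})$ (domination of our prefix), hence its final edge $g$ satisfies $\dep(g)\ge\arr(e_{k-1})$; the ``no earlier competitor'' clause then forces $\arr(g)\ge\arr(e_k)$, i.e.\ $\arr(R)\ge\arr(e_k)$. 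Since every prefix of the new path is then dominating---the shorter ones by the induction hypothesis, the full one just shown---it is useful dominating.

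The journey-time addendum is immediate once the correspondence is in place: $\dep(v_{e_1})=\dep(e_1)$ and $\arr(v_{e_{k-1}})=\arr(e_{k-1})$, so $\arr(v_{e_{k-1}})-\dep(v_{e_1})=\arr(e_{k-1})-\dep(e_1)$ matches the journey time of the prefix $(e_1,\ldots,e_{k-1})$ verbatim. I expect the main obstacle to be the local-to-global bridge used in both directions: making precise that an arrival-time threshold attached to a single dependency edge (the purely local quantity $\arr(e_{k-1})$) coincides with the minimum arrival time at $u_k$ over the whole family $\mathbb{P}(u_1,u_k,r,t)$. This is exactly where the ``useful'' (all-prefixes-dominating) hypothesis does the work, and it is also the part the base case must set up carefully, since the very first edge carries no predecessor to certify that it reaches $u_2$ at the minimum possible arrival time among parallel edges.
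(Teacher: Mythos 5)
Your proposal takes essentially the same route as the paper's own proof: induction on $k$ in both directions, peeling off the last edge via Corollary~\ref{corollaryPrefixOfUsefulDominatingPath} and reducing the extension step to the ``no earlier competitor'' clause of the \textsc{esdg} dependency condition. If anything, you are more complete than the paper: you spell out the competitor argument showing the extended path is dominating in the backward direction (which the paper merely asserts), and you correctly flag the $k=1$ base-case subtlety about parallel edges with equal departure times that the paper's proof glosses over.
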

\begin{proof}
    Let $e_{k-1}=(u, v, \alpha, \omega)$ and $e_k=(v, w, \alpha', \omega')$, and we use this notation in the following proofs.
We first prove the forwarding direction of this lemma using induction on $k$. %In the base case, $k$ equals to 1. From the construction of an ESDG graph, every node in $\Tilde{G}$ correspond to a connection in $G$ and vice versa. Thus the base case holds true. 
In the base case,  $k$ equals to 1, and  $e_1$ is a useful dominating path. Because, each edge in $G$ is represented as a node in ESDG graph $\Tilde{G}$, there is a path of length 0 from $v_e$ to $v_e$ in  $\Tilde{G}$. Thus the base case holds true. 
    %Consider a useful dominating path $P$ in $G$ on $r$, which is going through $k$ vertices. 
    Let $P$ be the path formed with the sequence $e_1, \ldots, e_k$ of $k$ edges.
    Let $P'$ be the subpath of $P$ on the first $k-1$ edges.
    $P'$ is useful dominating path due to Corollary~\ref{corollaryPrefixOfUsefulDominatingPath}.
    %the definition of useful dominating path. 
    %The path $P'$  appears in $\Tilde{G}$, 
    %Let $f'=(v_{k-2}, v_{k-1}, \alpha', \omega')$ and $f=(v_{k-1}, v_{k}, \alpha, \omega)$ be the last two connections of $P$. 
    %Let $e_{k-1}=(u, v, \alpha, \omega)$ and $e_k=(v, w, \alpha', \omega')$ be the last two connections of $P$. 
    Due to induction hypothesis, there is a path $Q$ from $v_{e_1}$ to $v_{e_{k-1}}$ in $\Tilde{G}$.
    Because $P$ is a dominating path, there does not exist any edge $(v, w, \alpha'', \omega'')$, st. $\alpha'' \geq \omega$ and $\omega'' < \omega'$. 
    Hence, we would have added an edge between $e_{k-1}$ and $e_k$ in $\Tilde{G}$. 
    Thus, the concatenation of path $Q$ from $v_{e_1}$ to $v_{e_{k-1}}$ and edge $(v_{e_{k-1}},v_{e_k})$, results a path from $v_{e_1}$ to $v_{e_{k}}$ in $\Tilde{G}$. 

   Now, we prove the backward direction of the lemma, using induction on $k$.
   The base case holds true when $k=1$, because there is a one to one mapping between edges in $G$ and nodes in $\Tilde{G}$.
    Let $P$ be the path formed with the sequence $v_{e_1}, \ldots, v_{e_k}$ of $k$ nodes.
    Let $P'$ be the subpath of $P$ on the first $k-1$ edges.
    Due to induction hypothesis, there is a useful dominating path $Q$ from $e_1$ to $e_{k-1}$ in $G$.
Since  an edge is added from $e_{k-1}$ to $e_k$ in $\Tilde{G}$, there does not exist any edge $(v, w, \alpha'', \omega'')$,
st. $\alpha'' \geq \omega$ and $\omega'' < \omega'$. 
Therefore, $e_1, \ldots, e_{k-1}, e_{k}$ is a useful dominating path in $G$. 

The departure times of  $v(e_1)$ and $e_1$ are equal, and the arrival times $v(e_k)$  and $e_k$ are equal, from the transformation. Thus the  journey times of both paths provided in the second part of the lemma are same.
\end{proof}

In this section, we first provide the pseudo-code to solve \textsc{earliest arrival time} problem  in Algorithm~\ref{algoEarliestPathVersion1}, and prove the correctness in Theorem~\ref{TheoremEAT}. Later, the pseudo-code of Algorithm~\ref{algoEarliestPathVersion1} is enhanced in Algorithm~\ref{algoEarliestPathVersion2}, to bound the running time.
% using necessary data-structure in Algorithm~\ref{algoEarliestPathVersion2}.
%for calculating the earliest arrival time from a source vertex $s$ to all other vertices in $G$, for a given ready time $rt$.

% \textcolor{red}{Multi source BFS; simple and elegant algorithm}
The key idea in Algorithm~\ref{algoEarliestPathVersion1} is to explore from those nodes in $\Tilde{G}$ that correspond to edges in $G$, such that their left end vertex is the source vertex, and the departure time is at least $rt$. During the exploration, we identify all the reachable nodes and update the arrival times of their right end vertices, if the new arrival time is better than the existing one.

\begin{theorem} \label{TheoremEAT}
    Given an ESDG $\Tilde{G}$ of a temporal graph $G$, a source vertex $s$ in $G$ and a ready time $rt$,  Algorithm \ref{algoEarliestPathVersion1} correctly computes the earliest arrival time from $s$ to every vertex in $G$.
\end{theorem}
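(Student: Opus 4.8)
The plan is to prove the theorem in two halves, \emph{soundness} and \emph{completeness}, and then combine them, leaning entirely on the correspondence established in Lemma~\ref{lemmaESDGTransformation1} together with Corollary~\ref{corollaryEPisUseful}. The point of this split is that, once the traversal of $\Tilde{G}$ is understood, no direct reasoning about time-respecting paths in $G$ is needed: each dependency edge of $\Tilde{G}$ already encodes a time-respecting continuation, so exploring $\Tilde{G}$ automatically respects the temporal constraints.

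First, for soundness I would argue that every value the algorithm assigns to a vertex $z$ is the arrival time of a genuine time-respecting path from $s$ that departs no earlier than $rt$. The algorithm updates $z$ only when it reaches a node $v_e$ with $\rightvertex(v_e)=z$, and it reaches such a node by following a path in $\Tilde{G}$ from some start node $v_{e_1}$ with $\leftvertex(v_{e_1})=s$ and $\dep(v_{e_1})\geq rt$. By the backward direction of Lemma~\ref{lemmaESDGTransformation1}, this $\Tilde{G}$-path corresponds to a useful dominating path $e_1,\ldots,e$ in $G$ whose first edge leaves $s$ at time $\dep(v_{e_1})\geq rt$ and whose arrival at $z$ equals $\arr(v_e)$. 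Hence the updated value is the arrival time of an actual path obeying the ready-time constraint, so it is never smaller than the true earliest arrival time at $z$.

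Next, for completeness I would fix a vertex $z$ reachable from $s$ after $rt$ and let $P$ be an earliest arrival path with $\dep(P)\geq rt$ and $\arr(P)=\tau$, the true earliest arrival time. By Corollary~\ref{corollaryEPisUseful} there is a useful dominating path $Q$ on the same route with $\dep(Q)=\dep(P)\geq rt$ and $\arr(Q)=\tau$. Applying the forward direction of Lemma~\ref{lemmaESDGTransformation1} to $Q = e_1,\ldots,e_k$ yields a path $v_{e_1},\ldots,v_{e_k}$ in $\Tilde{G}$; its first node satisfies $\leftvertex(v_{e_1})=s$ and $\dep(v_{e_1})=\dep(Q)\geq rt$, so $v_{e_1}$ is one of the start nodes from which the algorithm explores, while its last node satisfies $\rightvertex(v_{e_k})=z$ and $\arr(v_{e_k})=\tau$. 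Because the exploration computes the full set of nodes reachable in $\Tilde{G}$ from the start nodes, $v_{e_k}$ is visited and the value stored at $z$ is driven down to at most $\arr(v_{e_k})=\tau$. Combining the halves, the reported value for each $z$ is simultaneously $\geq \tau$ and $\leq \tau$, hence exactly $\tau$, while vertices with no qualifying path are never updated and retain $\infty$.

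The step I expect to be the main obstacle is the completeness argument, and specifically verifying that the traversal reaches every node it must. I would need to confirm that the algorithm performs a complete reachability sweep over $\Tilde{G}$ from \emph{all} qualifying start nodes, rather than a single-source walk that could terminate early, and that the minimizing update is applied at each visited node, since $z$ may be the right vertex of many distinct nodes $v_e$ and the earliest arrival is the minimum of $\arr(v_e)$ over all of them. A secondary point worth checking is that restricting the start nodes to $\dep(v_{e_1}) \geq rt$ loses nothing: this holds precisely because $\dep(Q)=\dep(e_1)$, so the ready-time filter applied to first edges coincides exactly with the ready-time constraint on whole paths.
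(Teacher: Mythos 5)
Your proposal is correct and follows essentially the same route as the paper's proof: both rest on the two directions of Lemma~\ref{lemmaESDGTransformation1} (paths in $\Tilde{G}$ from qualifying start nodes correspond exactly to useful dominating paths in $G$ departing at time at least $rt$) combined with Corollary~\ref{corollaryEPisUseful}. Your soundness/completeness split simply makes explicit the two directions that the paper compresses into the single phrase that these paths ``precisely correspond,'' and your closing checks (exploration from \emph{all} qualifying start nodes, minimization over all nodes $v_e$ with $\rightvertex(v_e)=z$) are exactly what the algorithm's outer loop and Line~\ref{algoLineEATUpdatev1} provide.
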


\begin{proof}
    Let $z$ be a vertex in $G$ such that $z\neq s$. We now prove that, after all the iterations of Algorithm~\ref{algoEarliestPathVersion1} are over, $\eat[z]$ is equal to the earliest arrival time from $s$ to $z$. 
    In Algorithm~\ref{algoEarliestPathVersion1}, for every node $x$ in  $\Tilde{G}$ such that $\leftvertex(x)=s$ and $\dep(x)\geq rt$, we traverse from $x$, go through all the paths to find all reachable nodes in $\Tilde{G}$.
    The arrival times of all these paths are considered and stored the minimum one in Line~\ref{algoLineEATUpdatev1}.
    Also, these paths in $\Tilde{G}$ precisely correspond to those useful dominating paths from $s$ to $z$ in $G$ whose departure time is at least $rt$, due to Lemma~\ref{lemmaESDGTransformation1}.
    From Corollary~\ref{corollaryEPisUseful}, it is sufficient to consider useful dominating paths from $s$ to $z$ that depart at time at least $rt$, to obtain an earliest arrival path from $s$ to $z$,  for the given departure time $rt$. Thus the theorem holds true.  
\end{proof}

\begin{algorithm2e}
\caption{Earliest Arrival Time in Temporal Graph}
\label{algoEarliestPathVersion1}
\SetAlgoLined
\KwIn{A source vertex $s$, ready time at source $rt$ and edge scan dependency graph $\Tilde{G}$ of a temporal graph $G$.}
\KwOut{ For each vertex $z$ in $G$, the earliest arrival time from $s$ to $z$.}

\lFor{each vertex $z$ in $V(G)-s$}{ $eat[z]$ = $\infty$ } $eat[s] = rt$ \;
\For{each node $x$ in  $\Tilde{G}$ such that $\leftvertex(x)=s$ and $\dep(x) \geq rt$ }
{
    \For{each path \textsc{p} from $x$ to a node $y$ in $\Tilde{G}$}
    {
          $\eat[\rightvertex(y)] = \min\{ \eat[\rightvertex(y)], \arr(y) \}$ \label{algoLineEATUpdatev1} \;
    }

}
\end{algorithm2e}

The correctness of Algorithm~\ref{algoEarliestPathVersion1} to solve  \textsc{earliest arrival time} problem follows from Theorem~\ref{TheoremEAT}.
Moving forward, Algorithm~\ref{algoEarliestPathVersion1} is enhanced to Algorithm~\ref{algoEarliestPathVersion2} by processing the vertices and edges of $\Tilde{G}$ at most once, across multiple breadth first search traversals to bound the running time. 
This does not disturb the correctness of the algorithm, because processing a vertex $x$ in $\Tilde{G}$ multiple times does not change the arrival time of $\rightvertex(x)$.
%When a vertex in $\Tilde{G}$ is visited for the first time, their out going neighbours are inserted .
%\textcolor{red}{are we pruning here?} \textcolor{blue}{connecting sentence is required here accordingly}

\begin{algorithm2e}
\caption{Earliest Arrival Time in Temporal Graph}
\label{algoEarliestPathVersion2}
\SetAlgoLined
\KwIn{A source vertex $s$, ready time at source $rt$ and edge scan dependency graph $\Tilde{G}$ of a temporal graph $G$.}
\KwOut{ For each vertex $z$ in $G$, the earliest arrival time from $s$ to $z$.}

\lFor{each vertex $z$ in $V(G)-s$}{ $eat[z]$ = $\infty$ } $eat[s] = rt$ \; \label{LineinitEAT}
\lFor{each node $x=(u,v,t, \lambda)$ in $V(\Tilde{G})$}{ $visited[x]=false$ }                  \label{LineinitVisited}
\For{each node $x=(u,v,t, \lambda)$ in $V(\Tilde{G})$ such that $\leftvertex(x)=s$ and $\dep(x) \geq rt$ } 
{ \label{algoLineEATsrcnodeprocessing}
    $q.insert(x)$ ;   $visited[x] = true$  \;
    \While{$(|q|\geq 1)$}
    {
        $x = q.pop()$ \;
        $eat[\rightvertex(x)] = min(eat[\rightvertex(x)], \arr(x))$ \;   \label{algoLineEATUpdate}
        \For{\textbf{each} neighbor $y$ of $x$ in $V(\Tilde{G})$ such that $visited[y]=false$}{ $q.insert(y)$; $visited[y] = true$  } 
    }     
}
\end{algorithm2e}

Algorithm~\ref{algoEarliestPathVersion2} works as follows. Given a source vertex $s$, ready time $rt$ at the source, and an \textsc{esdg} graph $\Tilde{G}$, the algorithm initializes the earliest arrival time $\eat[z]$ for each vertex $z$ in $V(G)$. The earliest arrival time represents the minimum arrival time to reach vertex $z$ from the source vertex $s$. During the initialization phase in Line~\ref{LineinitEAT}, all vertices, except the source, are set to have an earliest arrival time of $\infty$, while the source vertex is set to $rt$. 
Also, $visited[x]$ for every node $x$ in $\Tilde{G}$ is set to \emph{false} in Line~\ref{LineinitVisited}, to indicate that all nodes in the beginning are not visited. We now, go through all the nodes $x$ in $\Tilde{G}$ such that $\leftvertex(x)=s$ and $\dep(x) \geq rt$, and perform a breadth first kind of traversal from $x$, as follows.
The queue $Q$ is also initialized with  $x$ and $visited[x]$ is updated with true.
%where $u=s$ and the departure time at the source vertex is greater than or equal to $t$.
After the initialization, the algorithm proceeds to process edges by deleting them one by one from the queue and perform relaxation in Line~\ref{algoLineEATUpdate}. 
Further, all the unvisited neighbours of $x$ are added to the queue. These two steps are repeated until the queue becomes empty. Finally, for every vertex $z$ in $G$, $\eat[z]$ holds the earliest arrival time to reach $z$.

\section{Efficient Algorithm for Fastest Path duration}
In this section, we propose efficient  algorithms for computing the fastest path duration from the given source vertex to all the  vertices at high-level in Algorithm~\ref{algoFastestPathVersion1} and the complete details in Algorithm~\ref{algoFastestPathVersion2}, and the algorithm correctness is proved in Theorem~\ref{theoremFPDAlgo}.

% \textcolor{blue}{Efficiency of many graph algorithms in practice is based on effective way of handling pruning and redundant computations. The following table shows four possibile ways of designing algorithms. }
% \textcolor{red}{Multiple algorithms can be desinged based on ESDG.}

% \textcolor{blue}{Explain 4 ideas crisply.}

% \textcolor{blue}{We first focus to handle pruning.}

The fundamental idea in Algorithm~\ref{algoFastestPathVersion1} is to perform a graph traversal from those nodes in $\Tilde{G}$ that correspond to the outgoing edges of a source vertex in $G$. During each graph traversal phase, we propagate the starting time to all the reachable nodes and update the journey times of their right end vertices, if the new journey time is better than the existing one.

\begin{algorithm2e}
\caption{1  - All Fastest Path Algorithm using ESDG}
\label{algoFastestPathVersion1}
\SetAlgoLined
\KwIn{A source vertex $s$, and an edge scan dependency graph $\Tilde{G}$ of a temporal graph $G$}
\KwOut{For each vertex $z$ in $G$, the  fastest duration from $s$ to $z$.}

\lFor{each vertex $z$ in ${G}$ $\setminus$ $s$}{ $\journey[z] = \infty$} \label{startinitV1}
$\journey[s]=0$ \;
\For{each node $x$ in $\Tilde{G}$, such that $\leftvertex(x)=s$} 
{ \label{LineAlgoV1Phasestart}
    \For{each each path \textsc{p} from $x$ to a node $y$ in $\Tilde{G}$}
    {
          $\journey[\rightvertex(y)] = \min\{ \journey[\rightvertex(y)], \arr(y)-\dep(x) \}$ \label{algoLineJourneyUpdate} \;
    } 
} \label{LineAlgoV1Phaseend}
\end{algorithm2e}

\begin{theorem} \label{theoremFPDAlgo}
     Given an ESDG $\Tilde{G}$ of a temporal graph $G$, a source vertex $s$ in $G$, Algorithm \ref{algoFastestPathVersion1} correctly computes the fastest path duration from $s$ to every vertex $z$ in $G$. 
\end{theorem}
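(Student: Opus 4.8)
The plan is to mirror the structure of the proof of Theorem~\ref{TheoremEAT}, adapting it from earliest arrival times to journey times. Fix a vertex $z \neq s$ in $G$, and I would argue that once all iterations of Algorithm~\ref{algoFastestPathVersion1} terminate, $\journey[z]$ equals the fastest path duration from $s$ to $z$. The key conceptual bridge is Corollary~\ref{corollaryFPisUseful}: a fastest path from $s$ to $z$ is a dominating path on some route, so there exists a useful dominating path $Q$ from $s$ to $z$ with the same route, departure time, and arrival time, and hence the same journey time. Therefore it suffices to show that the algorithm examines the journey times of exactly the useful dominating paths from $s$ to $z$ and retains the minimum.

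First I would establish the correspondence between the paths the algorithm explores and useful dominating paths in $G$. By Lemma~\ref{lemmaESDGTransformation1}, a sequence of edges $e_1, \ldots, e_k$ in $G$ is a useful dominating path if and only if the corresponding sequence of nodes $v_{e_1}, \ldots, v_{e_k}$ is a path in $\Tilde{G}$. The outer loop starts a traversal from every node $x$ with $\leftvertex(x) = s$, which precisely captures the first edges of paths departing from the source (no ready-time constraint is imposed here, since for the fastest path problem all departure times are admissible). Thus every path $\textsc{p}$ from such an $x$ to a node $y$ in $\Tilde{G}$ corresponds to a useful dominating path from $s$ to $\rightvertex(y)$ in $G$, and conversely every useful dominating path from $s$ to $z$ arises this way.

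Next I would verify that the update in Line~\ref{algoLineJourneyUpdate} computes the correct journey time. Because the traversal begins at $x$ (the starting edge of the path) and ends at $y$, the quantity $\arr(y) - \dep(x)$ is exactly the arrival time minus the departure time of the corresponding path in $G$; by the second clause of Lemma~\ref{lemmaESDGTransformation1}, this equals the journey time of that useful dominating path. Taking the minimum over all such paths reaching $z$, combined with the observation that a fastest path's journey time is achieved by some useful dominating path (Corollary~\ref{corollaryFPisUseful}), shows that $\journey[z]$ converges to the fastest path duration.

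The main obstacle I anticipate is handling the propagation of the departure time $\dep(x)$ correctly along each traversal. Unlike the \textsc{eat} setting, where the relevant quantity $\arr(y)$ depends only on the terminal node, here the journey time depends jointly on the originating node $x$ and the terminal node $y$, so I must be careful that the departure time $\dep(x)$ referenced in the update indeed refers to the root of the current traversal rather than an intermediate node. I would also want to note explicitly that distinct source nodes $x$ may reach the same $z$ with different journey times, and the outer loop over all source nodes together with the minimum in Line~\ref{algoLineJourneyUpdate} ensures the smallest is retained; this is the step where I would take the most care to ensure no candidate fastest path is omitted.
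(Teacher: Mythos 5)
Your proposal is correct and follows essentially the same route as the paper's own proof: it invokes Lemma~\ref{lemmaESDGTransformation1} to identify the paths explored from nodes $x$ with $\leftvertex(x)=s$ in $\Tilde{G}$ with the useful dominating paths from $s$ in $G$, and Corollary~\ref{corollaryFPisUseful} to conclude that minimizing $\arr(y)-\dep(x)$ over these paths in Line~\ref{algoLineJourneyUpdate} yields the fastest path duration. Your added caution about tracking $\dep(x)$ from the root of each traversal is sound but concerns the optimized Algorithm~\ref{algoFastestPathVersion2} (where the $st[\,]$ array handles it) rather than Algorithm~\ref{algoFastestPathVersion1}, whose update references $x$ explicitly.
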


\begin{proof}
Let $z$ be a vertex in $G$ such that $z\neq s$. We now show that, after all the iterations of Algorithm~\ref{algoFastestPathVersion1} are over, $\journey[z]$ is equal to the journey time of a fastest path from $s$ to $z$. 
In Algorithm~\ref{algoFastestPathVersion1}, for every node $x$ in  $\Tilde{G}$ such that $\leftvertex(x)=s$, we traverse from $x$, go through all the paths to find all  reachable nodes in $\Tilde{G}$.
%The key step in the algorithm is to traverse through all the paths that originate at x and find all the all reachable nodes in $\Tilde{G}$
%find all paths from $x$ to the all reachable nodes in $\Tilde{G}$, where $x \in \delta_{out}(s)$. 
Out of all these paths, let us observe all those paths that end at $y$, such that  $\rightvertex(y)=z$.
%  $y\in \delta_{in}(z)$.
The journey times of all these paths are considered and stored the minimum one in $\journey[z]$ in Line~\ref{algoLineJourneyUpdate}.
Also, these paths in $\Tilde{G}$ precisely correspond to the useful dominating paths from $s$ to $z$ in $G$ due to Lemma~\ref{lemmaESDGTransformation1}.
%Since every fastest path is a useful dominating path, it is sufficient to consider all useful dominating paths.
%Processing all useful dominating paths from $s$ to $z$ in $G$ is sufficient to obtain a fastest path from s t z, because of Leamma~\ref{}.
%For every fastest path p from s to z, there exist a useful dominating path q from s to z, such that journey(p) = journey(q)
From Corollary~\ref{corollaryFPisUseful}, it is sufficient to consider useful dominating paths from $s$ to $z$ to obtain a fastest path from $s$ to $z$.
Thus the theorem holds true.
\end{proof}

The correctness of Algorithm \ref{algoFastestPathVersion1} is formally established in Theorem \ref{theoremFPDAlgo}.

% \textcolor{blue}{We now focus to handle redundant nodes.}

Turning our attention to enhance the efficiency, we avoid processing the same node multiple times. 
%further optimizations to Algorithm \ref{algoFastestPathVersion1}, enhancing its efficiency, particularly when nodes are processed in multiple paths. 
For instance, consider two nodes, $x_i$ and $x_j$, in $\tilde{G}$ with $\leftvertex(x_i) = \leftvertex(x_j) = s$, selected during iterations $i$ and $j$ of the outer loop in Line~\ref{LineAlgoV1Phasestart} of Algorithm~\ref{algoFastestPathVersion1}. Let $y$ be a reachable node from both $x_i$ and $x_j$, with $P_i$ denoting the path from $x_i$ to $y$ and $P_j$ representing the path from $x_j$ to $y$. If $\dep(x_j) \leq \dep(x_i)$, then it follows that $\journey(P_j) \geq \journey(P_i)$ since $\arr(P_i) = \arr(P_j)$. This insightful observation leads to a critical optimization: a node processed in the $i^{th}$ iteration of the outer loop need not be processed again in the $j^{th}$ iteration if $\dep(x_j) \leq \dep(x_i)$, as the journey from $s$ to $\rightvertex(y)$ does not decrease. To harness the advantages of this optimization, we traverse nodes in the edge scan dependency graph whose left vertex is the source vertex, by ordering them in the non-increasing order based on their departure times. This avoids the redundant computation of processing the same nodes multiple times.

 %   Let $x$ be a node in $\tilde{G}$ that was visited in the $j^{th}$ iteration, where $1 \leq j \leq i-1$.  Let $P'$ be the corresponding path. Furthermore, let $P$ be a path from $s$ to $t_i$ via $x$, with $x$ ending at $v_i$. We can observe that $\journey(P) \geq \journey[v_i]$. This is because both $P$ and $P'$ have the same arrival time, denoted by $arrivalTime(P') = arrivalTime(P)$. Additionally, we have $departureTime(P) < departureTime(P')$ and $\journey(P') \geq \journey[v_i]$. Therefore, extending the traversal by considering $x$ and the path $P$ does not decrease the journey time required to reach $v_i$. As a result, we can early terminate and prune the visited nodes in the subsequent iterations, knowing that the current path $P'$ already provides a faster journey time to reach $v_i$.

\begin{algorithm2e}
\caption{1  - All Fastest Path Algorithm using ESDG}
\label{algoFastestPathVersion2}
  \SetAlgoLined
\KwIn{A source vertex $s$, and an edge scan dependency graph $\Tilde{G}$ of a temporal graph $G$}
\KwOut{For each vertex $z$ in $G$, the  fastest duration from $s$ to $z$.}

\lFor{each vertex $z$ in ${G}$ $\setminus$ $s$}{ $\journey[z] = \infty$} \label{startinitFPD}
$\journey[s]=0$ \;
\lFor{each node $x$ in $\Tilde{G}$}{ $st[x] = - 1 $ }  \label{endinitFPD}
\For{each node $x$ in $\Tilde{G}$ such that $left(x)=s$ and decreasing order of $x.t$} 
{ \label{algoLineFPDsrcnodeprocessing}
    $q.insert(x)$ \label{LineAlgoInit_v1};
    $st[x] = x.t$ \;
    \While{$(|q|\geq 1)$}
    {
        $x = q.pop()$ \;
        $\journey[x.v] = min(\journey[x.v],\arr(x) - st[x])$ \;
        \lFor{each neighbor $y$ of $x$  in $\Tilde{G}$ such that $st[y] == -1 $} { $st[y] = st[x]$; $q.insert(y)$ }
    }
}
\end{algorithm2e}

%In this algorithm \ref{algoFastestPathVersion2}, we utilize a single queue to process connections, eliminating the need for maintaining a list of tuples at each vertex. By employing a pruning technique, we process only a minimal number of connections.
We now provide the detailed description of our algorithm to solve \textsc{fastest path duration}, whose pseudo-code is given in Algorithm~\ref{algoFastestPathVersion2}.
For each vertex $z$ in $G$, the variable $\journey[z]$ stores the journey time, representing the fastest path duration from the source vertex $s$ to $z$.
For each node $x$ in $\Tilde{G}$, we use $st[x]$ to store the starting time of a journey that departs as late as possible from $s$ and  reaches node $x$.
We use $q$ to denote the queue data structure, which helps to perform breadth first kind of traversal.
During the initialization phase (lines~\ref{startinitFPD}-\ref{endinitFPD}), we set the fastest path duration of all vertices, except the source vertex, as $\infty$. The fastest path duration of the source vertex is set to $0$.
Also, the starting time for all the nodes of $\Tilde{G}$ is set to -1, to indicate that none of them are visited in the beginning. 
We now go through each node $x$  in $\Tilde{G}$ such that $x.u=s$ in decreasing order based on the departure time of the nodes, and perform breadth first search kind of traversal to identify reachable nodes from $x$ as follows.
We insert $x$ in the queue $q$, and initialize $st[x]$ with its departure time. 
As long as the queue is not empty, we extract a node $x$ from the queue and update  $\journey[x.v]$ if $x.t+x.\lambda-st[x])$ is lesser than $\journey[x.v]$. Afterwards, we insert each neighbor $y$ of $x$ into the queue $q$ and update $st[y]$ with $st[x]$, if $y$ has not been visited yet.
%with the maximum value between $st[y]$ and $st[x]$. This algorithm only processes the nodes (connections) that are inserted into the queue. 
%\textcolor{blue}{By checking if the respective connection starting time is not set (i.e., -1), we optimize the algorithm to reduce unnecessary connection processing, thereby lowering the processing time.} 
Finally, $\journey[z]$ holds the fastest path duration from $s$ to any vertex $z$ due to Theorem~\ref{theoremFPDAlgo}.

%\textcolor{blue}{run time analysis:} The time complexity of the \textsc{EAT} and \textsc{FPD} algorithm is $O(m + k*m)$, where $m$ represents the number of edges in the temporal graph, and $k$ is the average temporal out-degree of a vertex. Given that $k$ is a constant with a maximum value of 3, we can simplify the run time analysis to $O(m)$, making it linear in terms of the number of edges in the temporal graph.
\noindent
\textbf{Time Complexity Analysis.} We perform breadth first search kind of traversals from multiple vertices of $\Tilde{G}$, in Algorithm~\ref{algoEarliestPathVersion1} and Algorithm~\ref{algoEarliestPathVersion2}. Although multiple breadth first search traversals are performed, we make sure that each vertex of $\Tilde{G}$ is inserted in the underlying queue and process their incident edges at most once, with the help of visited[ ] array and st[ ] array, in the respective algorithms. Thus the asymptotic running times of our algorithms is $O(|V(\Tilde{G})| + |E(\Tilde{G})|)$. From the \textsc{esd} graph construction, $|V(\Tilde{G})| = |E(G)|$. 
For each temporal edge $e=(u,w,t,\lambda)$ of $G$, the out-degree of $v_e$ in $\Tilde{G}$ is equal to the out-degree of $w$ in the static road network associated with $G$.
Also, the average out degree of a vertex in $\Tilde{G}$ is equal to the average out degree of the static road network associated with a public transport network $G$, which is denoted by $\Delta$. 
Consequently, the asymptotic running time of our algorithms is $O(m \times \Delta +n)$.
This value of $\Delta$ turned out to be a small constant in all the real world public transport networks. From Table~\ref{tab:AvgOutDegree}, we can observe that the max degree and average degree of vertices in real world transport networks are $61$ and $3$, respectively. Eventually, the time complexity of our algorithms is bounded by $O(m+n)$.

% \begin{table}[H]
% \centering
% % \resizebox{\columnwidth}{!}{
% \begin{tabular}{l|cc|}
% \cline{2-3}
%                                            & \multicolumn{2}{c|}{\textbf{Max Out Degree}} \\ \hline
% \multicolumn{1}{|c|}{\textbf{Data Sets}} & \multicolumn{1}{c|}{\textbf{Static Graph}} & \textbf{ESD Graph} \\ \hline
% \multicolumn{1}{|l|}{\textbf{Chicago}}     & \multicolumn{1}{c|}{17}         & 17         \\ \hline
% \multicolumn{1}{|l|}{\textbf{London}}      & \multicolumn{1}{c|}{7}          & 7          \\ \hline
% \multicolumn{1}{|l|}{\textbf{Los Angels}}  & \multicolumn{1}{c|}{7}          & 7          \\ \hline
% \multicolumn{1}{|l|}{\textbf{Madrid}}      & \multicolumn{1}{c|}{8}          & 8          \\ \hline
% \multicolumn{1}{|l|}{\textbf{New York}}    & \multicolumn{1}{c|}{3}          & 3          \\ \hline
% \multicolumn{1}{|l|}{\textbf{Paris}}       & \multicolumn{1}{c|}{61}         & 60         \\ \hline
% \multicolumn{1}{|l|}{\textbf{Peters burg}} & \multicolumn{1}{c|}{22}         & 22         \\ \hline
% \multicolumn{1}{|l|}{\textbf{Sweden}}      & \multicolumn{1}{c|}{43}         & 43         \\ \hline
% \multicolumn{1}{|l|}{\textbf{Switzerland}} & \multicolumn{1}{c|}{49}         & 49         \\ \hline
% \end{tabular}%
% % }
% \caption{Maximum Out-Degree Comparison between Static and ESD Graphs}
% \label{tab:max out degree}
% \end{table}
\section{Implementation Details}

% \textcolor{blue}{In this section, we discuss the \textsc{esd}-graph data structure and various optimizations followed while implementing Algorithms~\ref{algoEarliestPathVersion2} and~\ref{algoFastestPathVersion2}.}

In this section, we discuss the implementation details of the \textsc{esd}-graph data structure, and present various optimizations in the context of implementing Algorithm~\ref{algoEarliestPathVersion2} and Algorithm~\ref{algoFastestPathVersion2}.

% We have taken real-time public transportation data, usually available in the General Transit Feed Specification (GTFS) format, and transformed it into a Compressed Sparse Row (CSR) format. CSR ensures fast access to neighbour information, a critical need for our algorithms, enhancing efficiency across various applications. It utilizes two arrays: the offset array for quick neighbour access and the edge array for efficient neighbour node storage. This format excels in rapid neighbour information retrieval and graph operations, ideal for managing public transportation network data, as depicted in Figure 1.

%\noindent \textbf{Graph Representation :} We have taken real-time public transportation data, usually available in the General Transit Feed Specification \textsc{(GTFS)} format, and transformed it into an Edge Scan Dependency graph, as described in section 1. Finally, the transformed Edge Scan Dependency graph is stored in Compressed Sparse Row \textsc{(CSR)} format. \textsc{(CSR)} ensures fast access to neighbour information, a critical need for our algorithms. It utilizes two arrays: the offset array for quick neighbour access and the edge array for efficient neighbour node storage. This format excels in rapid neighbour information retrieval and graph operations, ideal for managing public transportation network data, as depicted in Figure. \ref{fig:CSRESDG}. Notably, the \textsc{(CSR)} format is highly space-efficient and accessing neighbour information is achieved in constant time.

\noindent \textbf{\textsc{esd}-graph data structure.} We process real-time public transportation data available in the General Transit Feed Specification \textsc{(GTFS)} format, and transform to an \textsc{esd} graph, as described in Section~\ref{Section_Graph_Transformation}.
We construct \textsc{esd}-graph data structure from a given temporal graph using the pre-processing algorithm illustrated in \cite{ESDG}.
We store an \textsc{esd} graph $\Tilde{G}$ in  our data-structure (offset, neighbour, left, right, departure, duration) whose parts are described below. The topology of $\Tilde{G}$ is captured using Compressed Sparse Row \textsc{(csr)} format.
\textsc{csr} ensures fast access to neighbour information, a critical need for our algorithms using two arrays offset[ ] and neighbours[ ]. For each vertex $v$ in $\Tilde{G}$, the neighbours of $v$ are located from position $offset[v]$ to $offset[v+1]-1$ in neighbours[ ]. For each vertex $v_e$ in $\Tilde{G}$, we maintain the following four attributes: $\leftvertex[v_e$] denotes the left vertex of $e$, $\rightvertex[v_e]$ denotes the right vertex of $e$, $\dep[v_e]$ denotes the departure time of $e$, and $\arr[v_e]$ denotes the arrival time of $e$. This format excels in rapid neighbour information retrieval and graph operations, ideal for managing public transportation network data. Notably, this format is highly space-efficient and accessing neighbour information is achieved in constant time.
%Efficiently implemented Line~\ref{algoLineEATsrcnodeprocessing} of Algorithm ~\ref{algoEarliestPathVersion2} and Line~\ref{algoLineFPDsrcnodeprocessing} of Algorithm ~\ref{algoFastestPathVersion2} by maintaining the source vertex related nodes in decreasing order of departure times. 
For each vertex $v$ in $G$, we maintain a sequence of vertices in $\Tilde{G}$, which are correspond to the outgoing edges of $v$. This helps to retrieve the necessary vertices in $\Tilde{G}$ in Line~\ref{algoLineEATsrcnodeprocessing} of Algorithm ~\ref{algoEarliestPathVersion2} and Line~\ref{algoLineFPDsrcnodeprocessing} of Algorithm ~\ref{algoFastestPathVersion2}, efficiently.

\noindent
\textbf{Optimizations.} We describe various optimizations that improve the running time of our algorithms in practice. 
In Line~\ref{algoLineEATUpdate} of Algorithm~\ref{algoEarliestPathVersion2}, if the earliest arrival time of a vertex is not updated, then we can ignore exploring its neighbours. This optimization is based on the following observation. If $e$ and $e'$ corresponds to incoming edges of a vertex in $G$ such that their arrival times are same, then the outgoing neighbours of $v_e$ and $v_{e'}$ are same in $\Tilde{G}$.

%In the first optimization, if two nodes, denoted as $P_i$ and $P_j$, share the same earliest arrival time, we can avoid processing the node again, as there's no additional benefit. This optimization is based on the fact that if two paths have the same arrival time at their destination, there's no need to recompute them.

% The second optimization is based on a condition involving the node's departure time and duration. If the sum of the node's departure time and duration is greater than or equal to the earliest arrival time at its neighboring node, we can safely skip inserting that neighboring node into the queue. This is because it would have been inserted during previous iterations, and its information is still valid. These optimizations help reduce redundant computations and enhance the efficiency of our algorithm.

% We implement a bit optimization technique for the visited array to further optimize algorithm~\ref{algoEarliestPathVersion2}. By maintaining a one-bit to represent the visited status of vertices, we can significantly reduce memory usage and improve processing speed. This optimization involves setting or clearing individual bits to mark visited vertices. This technique is beneficial when dealing with many vertices and allows us to efficiently keep track of visited nodes without the memory overhead of a traditional Boolean array. 

We employ a bit optimization technique in Algorithm~\ref{algoEarliestPathVersion2}.
In particular, an array $B$ of $n$ bits are used, and utilize a single bit of $B$ rather than one byte, to represent the visited status of a vertex. This optimization involves resetting all the bits of $B$ at the beginning  to initialize all the vertices of $\Tilde{G}$ as unvisited. Setting an individual bit in $B$  helps to mark a vertex as visited. These two operations can be performed in constant time. This optimization helps to achieve a notable reduction in memory consumption and an improvement in processing speed.
%where each vertex status is denoted using one bit rather than one byte, 
%This enhancement aims to boost algorithm performance. 
%This optimization involves two operations setting and clearing individual bits to mark a vertex as visited, which takes constant time. By utilizing a single bit to represent the visited status of a vertex, we achieve a notable reduction in memory consumption and an improvement in processing speed. This optimization involves two operations setting and clearing individual bits to mark a vertex as visited, which takes constant time. %This technique proves advantageous, particularly in scenarios with numerous vertices, enabling us to efficiently track visited nodes without incurring the memory overhead associated with a traditional Boolean array.}

\section{Experiments}

In this section, we discuss the experimental setup and various experiments carried out on Algorithms~\ref{algoEarliestPathVersion2} and~\ref{algoFastestPathVersion2}, and highlighting the speedup achieved over state-of-the-art algorithms. 

\noindent
\textbf{\textit{Technical Specifications and Data sets. }} The experimentation is conducted on a machine equipped with an \textsc{intel xeon e5-2620} v4 \textsc{cpu}, operating at a frequency of 2.20 GHz, featuring 32 GB of primary memory and 512 MB cache memory. The compiler used is gcc version 5.4.0. We have used nine different public transport network data sets \cite{OpenMobilityData,HaryanGPU} for our experiments. The statistics for each data set are given in Table \ref{tab: Dateset Statistics}.

\begin{table}[ht]
\centering
\begin{tabular}{|l|c|c|c|c|}
\hline
\multicolumn{1}{|c|}{Data Sets} & $|V(G)|$ & $|E(G)|=|V(\Tilde{G})|$ & $|E(\Tilde{G})|$ \\ \hline
Chicago     & 240   & 98157    & 44907     \\ \hline
London      & 20843 & 14064967 & 12103649  \\ \hline
Los Angels  & 13975 & 1979340  & 2320947   \\ \hline
Madrid      & 4689  & 1994688  & 2753161   \\ \hline
New York     & 987   & 514390   & 499713    \\ \hline
Paris       & 411   & 1068284  & 50965       \\ \hline
Peters burg  & 7573  & 4437010  & 6038003  \\ \hline
Sweden      & 45727 & 6567745  & 12144520  \\ \hline
Switzerland & 29870 & 9261315  & 12147435  \\ \hline
\end{tabular}
\caption{Data Set Statistics}
\label{tab: Dateset Statistics}
\end{table}

\noindent
\subsection{Performance of the Earliest Arrival Time Algorithm}

%assess their computational time. 
%The first one is the time ordered sequence of edges based Earliest Arrival Time algorithm, abbreviated as Edge-Stream-EAT, which processes all edges one by one to calculate the earliest arrival time from a given source vertex to the other vertices in the graph. 
%The second is the TRG-based Earliest Arrival Time algorithm, abbreviated as TRG-EAT. This algorithm is obtained from the fastest path algorithm in \cite{Sahani2021} with minor changes.

We implemented the state-of-the-art algorithms and our algorithm to solve \textsc{eat} problem in C++.
Specifically, we examined two state of the art algorithms proposed in \cite{wu_2016_Fast_EAT_Temporal, Sahani2021}. 
The  first earliest time algorithm, abbreviated as \textsc{edge-stream-eat}, is based on an edge stream, in which the temporal edges are relaxed  in non-decreasing order, based on their departure time \cite{wu_2016_Fast_EAT_Temporal}. 
The second earliest arrival time algorithm is based on a time respecting graph abbreviated as \textsc{trg-eat}, 
is obtained from the fastest path algorithm in \cite{Sahani2021} with minor changes.

We run state-of-the-art and proposed algorithms on $100$ generated random queries, each consisting of two values: a source vertex and a ready time. The source vertices are randomly selected from $0$ to $n$, where $n$ denotes the number of vertices in the underlying graph, and the corresponding ready times are chosen randomly within the range of $0$ to $100$. We then use these generated queries to run all three algorithms on nine public transportation data sets, measuring the average query running time in milliseconds. Table \ref{tab:EATRunningTime} presents the average running times of our proposed algorithm. The speedups of our approach in comparison to the state-of-the-art algorithms edge-scan based algorithm ~\cite{wu_2016_Fast_EAT_Temporal} and \textsc{trg} based algorithm~\cite{Sahani2021} shown in the Fig. \ref{EATSpeedup}.

Computing the earliest arrival time using our approach, we achieved \textit{$~183 \times$ maximum and $~24 \times$ average speedup} over algorithm ~\cite{wu_2016_Fast_EAT_Temporal} and \textit{$~48 \times$ maximum and $~24 \times$ average speedup} over ~\cite{Sahani2021}.
% while Fig. \ref{EATSpeedup} illustrates the speedups of our approaches compared to the connection-scan algorithm~\cite{wu_2016_Fast_EAT_Temporal}.

% \begin{table}[H]
% \centering
% \resizebox{\columnwidth}{!}{
% \begin{tabular}{|l|c|c|c|}
% \hline
% \multicolumn{1}{|c|}{\textbf{Data Sets}} &
%   \textbf{\begin{tabular}[c]{@{}c@{}} Connection stream \end{tabular}} &
%   \textbf{\begin{tabular}[c]{@{}c@{}} ESDG \end{tabular}}&
%   \textbf{\begin{tabular}[c]{@{}c@{}}Our Approach \end{tabular}} \\ \hline
% \textbf{Chicago}     & 0.7064   & 0.1141   & 0.0528  \\ \hline
% \textbf{London}      & 100.6210 & 364.1140 & 25.6239 \\ \hline
% \textbf{Los Angels}  & 13.7513  & 49.7495  & 11.3584 \\ \hline
% \textbf{Madrid}      & 12.8811  & 44.7953  & 2.7957  \\ \hline
% \textbf{New York}     & 3.3507   & 3.3258   & 0.0328  \\ \hline
% \textbf{Paris}       & 6.7106   & 0.0553   & 0.0275  \\ \hline
% \textbf{Peters burg}  & 30.2450  & 48.1816  & 2.4754  \\ \hline
% \textbf{Sweden}      & 48.2092  & 174.5660 & 38.7643 \\ \hline
% \textbf{Switzerland} & 66.1216  & 150.5750 & 10.9830 \\ \hline
% \end{tabular}%
% }
% \caption{EAT Average Running times (ms)}
% \label{tab:EATRunningTime}
% \end{table}

\begin{table}[H]
\centering
% \resizebox{\columnwidth}{!}{%
\begin{tabular}{|l|c|c|c|}
\hline
& \multicolumn{3}{c|}{ \begin{tabular}[c]{@{}c@{}}\textbf{Earliest Arrival Time Algorithms} \\ Execution Time in milliseconds\end{tabular}} \\ \hline
\multicolumn{1}{|c|}{\textbf{Data Sets}} &
  \textbf{\begin{tabular}[c]{@{}c@{}} Edge-Stream \\ EAT~\cite{wu_2016_Fast_EAT_Temporal}\end{tabular}} &
  \textbf{\begin{tabular}[c]{@{}c@{}}TRG \\ EAT~\cite{Sahani2021}\end{tabular}} &
  \textbf{\begin{tabular}[c]{@{}c@{}}Our Approach \\ Algorithm~\ref{algoEarliestPathVersion2} \end{tabular} } \\ \hline
\textbf{Chicago}     & 0.79   & 1.13     & 0.07  \\ \hline
\textbf{London}      & 110.23 & 1,266.55 & 35.39 \\ \hline
\textbf{Los Angels}  & 15.73  & 192.23   & 9.73  \\ \hline
\textbf{Madrid}      & 14.51  & 240.42   & 5.42  \\ \hline
\textbf{New York}     & 3.62   & 10.67    & 0.85  \\ \hline
\textbf{Paris}       & 7.32   & 1.92     & 0.04  \\ \hline
\textbf{Petersburg}  & 32.27  & 115.73   & 6.29  \\ \hline
\textbf{Sweden}      & 51.93  & 270.18   & 31.36 \\ \hline
\textbf{Switzerland} & 70.70  & 150.34   & 15.85 \\ \hline
\end{tabular}%
% }
% \caption{EAT Execution times in milliseconds}
\caption{Run time analysis}
\label{tab:EATRunningTime}
\end{table}

% Performance and speed up in image
\begin{figure}[H]
\centering
\includegraphics[width=\columnwidth]{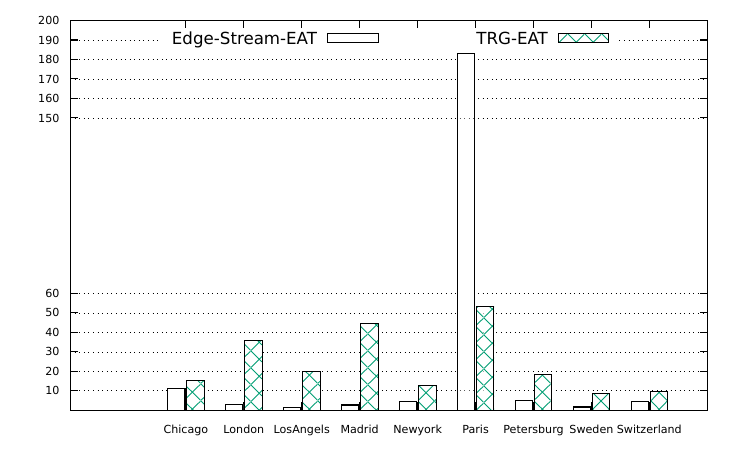}
\caption{Speed up  of Algorithm~\ref{algoEarliestPathVersion2} w.r.t the state of the art algorithms~\cite{wu_2016_Fast_EAT_Temporal,Sahani2021} }
\label{EATSpeedup}
\Description{Temporal Graph}
\end{figure}

\subsection{Performance of the Fastest Path Duration Algorithm}
\noindent
The state-of-the-art algorithms to solve \textsc{fpd} from \cite{wu_2016_Fast_EAT_Temporal} and \cite{Sahani2021} are abbreviated as \textsc{edge-stream-fpd} and \textsc{trg-fpd}, respectively. The experiment involved generating $100$ queries, each comprising $100$ source vertices selected randomly from the range $0$ to $n$, where $n$ denotes the number of vertices in the underlying graph. These same queries were employed as input for the fastest path duration Algorithm~\ref{algoFastestPathVersion2} along with two state-of-the-art fastest path duration algorithms: the edge stream-based algorithm~\cite{wu_2016_Fast_EAT_Temporal} and the \textsc{trg}-based algorithm~\cite{Sahani2021}. The evaluation was conducted on nine real world public transportation data sets as mentioned in the Table~\ref{tab: Dateset Statistics}, and the average running time for a single query was measured in milliseconds. The resulting average running times for the proposed algorithms are presented in Table \ref{tab:FPD12A}, while the speedups achieved by our approach compared to the state of the art algorithms are depicted in Fig. \ref{FPDSpeedup}.

Computing the fastest path duration using our approach, we achieved $6 \times$ average speedup over two base line algorithms \cite{wu_2016_Fast_EAT_Temporal} and \cite{Sahani2021} with maximum \textit{ $21 \times$ speedup } over algorithm \cite{wu_2016_Fast_EAT_Temporal} and \textit{$34 \times$ speedup} over algorithm proposed by \cite{Sahani2021}. 
In the public transportation application scenario, the \textsc{esd} graph data-structure is constructed once, and queried multiple times. For the data-sets shown in Table~\ref{tab: Dateset Statistics}, the pre-processing time required to construct the corresponding \textsc{esd}-graph is bounded by one minute, in practice. Hence our algorithms and implementation focus to reduce the query execution time for a given source vertex, while solving \textsc{eat} and \textsc{fpd} problems.
%, rather than analyzing the running time to construct the data-structure.

\begin{table}[H]
\centering
\resizebox{\columnwidth}{!}{
\begin{tabular}{|l|c|c|c|}
\hline
& \multicolumn{3}{c|}{ \begin{tabular}[c]{@{}c@{}}\textbf{Fastest Path Duration Algorithms} \\ Execution Time in milliseconds\end{tabular}} \\ \hline
\multicolumn{1}{|c|}{\textbf{Data Sets}} &
  \textbf{\begin{tabular}[c]{@{}c@{}} \textsc{edge-stream} \\ \textsc{fpd} ~\cite{wu_2016_Fast_EAT_Temporal}\end{tabular}} &
  \textbf{\begin{tabular}[c]{@{}c@{}}\textsc{trg} \\ \textsc{fpd}~\cite{Sahani2021}\end{tabular}} &
  \textbf{\begin{tabular}[c]{@{}c@{}}Our Approach \\ Algorithm~\ref{algoFastestPathVersion2} \end{tabular} } \\ \hline
\textbf{Chicago}     & \multicolumn{1}{c|}{1.90} & \multicolumn{1}{c|}{1.83} & 0.52 \\ \hline
\textbf{London}      & \multicolumn{1}{c|}{6885.24} & \multicolumn{1}{c|}{2880.80}  & 1576.75 \\ \hline
\textbf{Los Angels}  & \multicolumn{1}{c|}{874.91} & \multicolumn{1}{c|}{481.14} & 179.56 \\ \hline
\textbf{Madrid}      & \multicolumn{1}{c|}{1756.52} & \multicolumn{1}{c|}{530.86} & 212.09 \\ \hline
\textbf{Newyork}     & \multicolumn{1}{c|}{91.72} & \multicolumn{1}{c|}{18.81} & 4.44 \\ \hline
\textbf{Paris}       & \multicolumn{1}{c|}{1.67} & \multicolumn{1}{c|}{7.77} & 0.23 \\ \hline
\textbf{Petersburg}  & \multicolumn{1}{c|}{1305.77} & \multicolumn{1}{c|}{396.88} & 198.10 \\ \hline
\textbf{Sweden}      & \multicolumn{1}{c|}{1346.45} & \multicolumn{1}{c|}{975.65} & 469.56   \\ \hline
\textbf{Switzerland} & \multicolumn{1}{c|}{652.72} & \multicolumn{1}{c|}{489.88} & 182.03 \\ \hline
\end{tabular}
}
\caption{Run time analysis}
\label{tab:FPD12A}
\end{table}

\begin{figure}[H]
\centering
\includegraphics[width=\columnwidth]{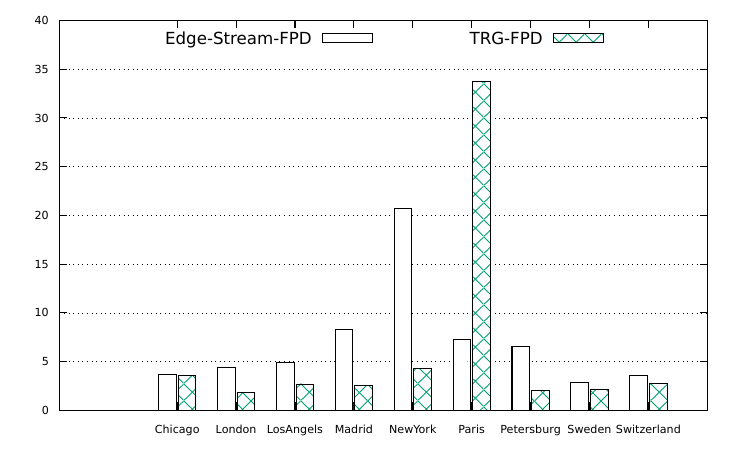}
\caption{Speed up  of Algorithm~\ref{algoFastestPathVersion2} w.r.t the state of the art algorithms~\cite{wu_2016_Fast_EAT_Temporal,Sahani2021} }
\label{FPDSpeedup}
\Description{Temporal Graph}
\end{figure}

\subsection{Key Insights}

Our earliest arrival time algorithm process at most $2\%$ of the edges and fastest path duration algorithm process at most $70\%$ of the edges, as shown in Figure~\ref{Fig:EAT_conn_processed} and Figure ~\ref{Fig:FPD_conn_processed}.  This is the key reason to the beat running time of the existing algorithms in practice.

\begin{figure}[H]
\centering
\includegraphics[width=\columnwidth]{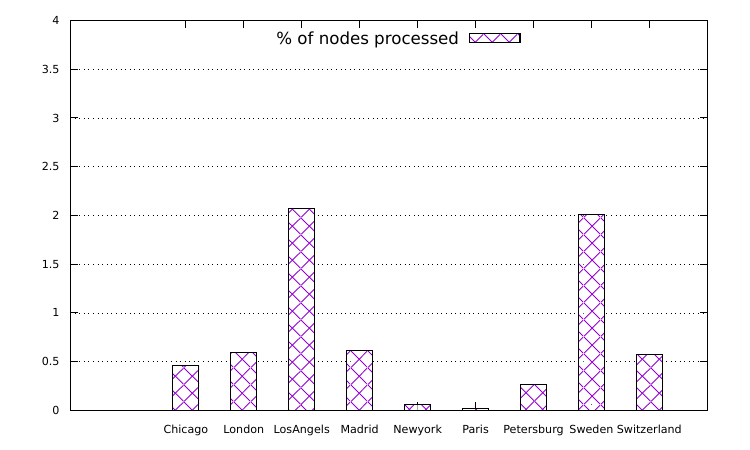}
\caption{\% of nodes processed by Algorithm~\ref{algoEarliestPathVersion2} }
\label{Fig:EAT_conn_processed}
\Description{Temporal Graph}
\end{figure}

\begin{figure}[H]
\centering
\includegraphics[width=\columnwidth]{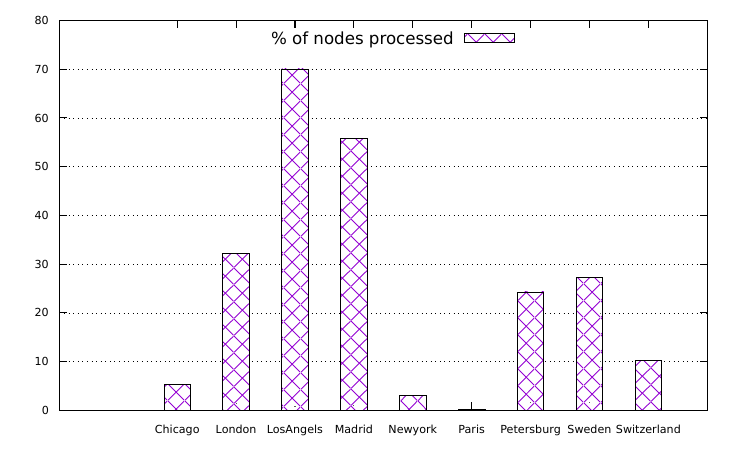}
\caption{\% of nodes processed by Algorithm~\ref{algoFastestPathVersion2} }
\label{Fig:FPD_conn_processed}
\Description{Temporal Graph}
\end{figure}

% \begin{table}[H]
% \resizebox{\columnwidth}{!}{%
% \begin{tabular}{|l|c|cc|}
% \hline
% \multirow{2}{*}{Data Sets} &
%   \multirow{2}{*}{|E(G)|} &
%   \multicolumn{2}{c|}{\textbf{Average number of nodes processed}} \\ \cline{3-4} 
%  &
%    &
%   \multicolumn{1}{c|}{\textbf{\begin{tabular}[c]{@{}c@{}}EAT \\ Algorithm~\ref{algoEarliestPathVersion2}\end{tabular}}} &
%   \textbf{\begin{tabular}[c]{@{}c@{}}FPD \\ Algorithm~\ref{algoFastestPathVersion2}\end{tabular}} \\ \hline
% \textbf{Chicago}     & 98157    & \multicolumn{1}{c|}{456}    & 5205    \\ \hline
% \textbf{London}      & 14064967 & \multicolumn{1}{c|}{82976}  & 4530119 \\ \hline
% \textbf{Los Angels}  & 1979340  & \multicolumn{1}{c|}{41069}  & 1386118 \\ \hline
% \textbf{Madrid}      & 1994688  & \multicolumn{1}{c|}{12238}  & 1112698 \\ \hline
% \textbf{Newyork}     & 514390   & \multicolumn{1}{c|}{326}    & 15151   \\ \hline
% \textbf{Paris}       & 1068284  & \multicolumn{1}{c|}{210}    & 1552    \\ \hline
% \textbf{Petersburg}  & 4437010  & \multicolumn{1}{c|}{11430}  & 1075480 \\ \hline
% \textbf{Sweden}      & 6567745  & \multicolumn{1}{c|}{132250} & 1789260 \\ \hline
% \textbf{Switzerland} & 9261315  & \multicolumn{1}{c|}{52832}  & 939813  \\ \hline
% \end{tabular}%
% }
% \caption{Average number of nodes in $E(G)$ processed in edge-scan-dependency-graph}
% \label{tab:Connections_Processed}
% \end{table}

\begin{table}[H]
\centering
\resizebox{\columnwidth}{!}{%
\begin{tabular}{|l|lc|}
\hline
\multicolumn{1}{|c|}{\multirow{2}{*}{\textbf{Data Sets}}} & \multicolumn{2}{c|}{\textbf{Our Approach Algorithm~\ref{algoEarliestPathVersion2}}}                        \\ \cline{2-3} 
\multicolumn{1}{|c|}{}                                    & \multicolumn{1}{c|}{\textbf{Running Time (ms)}} & \textbf{\# Nodes Processed} \\ \hline
\textbf{Paris}             & \multicolumn{1}{c|}{0.04}                       & 210                         \\ \hline
\textbf{Newyork}           & \multicolumn{1}{c|}{0.85}                       & 326                         \\ \hline
\textbf{Chicago}           & \multicolumn{1}{c|}{0.07}                       & 456                         \\ \hline
\textbf{Petersburg}        & \multicolumn{1}{c|}{6.29}                       & 11430                       \\ \hline
\textbf{Madrid}            & \multicolumn{1}{c|}{5.42}                       & 12238                       \\ \hline
\textbf{Los Angels}        & \multicolumn{1}{c|}{9.73}                       & 41069                       \\ \hline
\textbf{Switzerland}       & \multicolumn{1}{c|}{15.85}                      & 52832                       \\ \hline
\textbf{London}            & \multicolumn{1}{c|}{35.39}                      & 82976                       \\ \hline
\textbf{Sweden}            & \multicolumn{1}{c|}{31.36}                      & 132250                      \\ \hline
\end{tabular}%
}
\caption{Average number of nodes in $E(G)$ processed in edge-scan-dependency-graph}
\label{tab:EAT_AVG_Conn}
\end{table}

\begin{table}[H]
\centering
\resizebox{\columnwidth}{!}{%
\begin{tabular}{|l|cc|}
\hline
\multirow{2}{*}{Data Sets} & \multicolumn{2}{c|}{\textbf{Our Approach Algorithm~\ref{algoFastestPathVersion2}}}     \\ \cline{2-3} 
                           & \multicolumn{1}{c|}{\textbf{Running Time (ms)}} & \textbf{\# Nodes Processed} \\ \hline
\textbf{Paris}             & \multicolumn{1}{c|}{0.23}                     & 1517                        \\ \hline
\textbf{Chicago}           & \multicolumn{1}{c|}{0.52}                    & 5931                        \\ \hline
\textbf{Newyork}           & \multicolumn{1}{c|}{4.44}                    & 19779                       \\ \hline
\textbf{Switzerland}       & \multicolumn{1}{c|}{182.03}                    & 1127507                     \\ \hline
\textbf{Los Angels}        & \multicolumn{1}{c|}{179.56}                     & 1308683                     \\ \hline
\textbf{Petersburg}        & \multicolumn{1}{c|}{198.1}                     & 1615717                     \\ \hline
\textbf{Madrid}            & \multicolumn{1}{c|}{212.09}                    & 1655109                     \\ \hline
\textbf{Sweden}            & \multicolumn{1}{c|}{469.56}                    & 2787875                     \\ \hline
\textbf{London}            & \multicolumn{1}{c|}{1576.75}                    & 6232547                     \\ \hline
\end{tabular}%
}
\caption{Average number of nodes in $E(G)$ processed in edge-scan-dependency-graph}
\label{tab:FPD_AVG_Conn}
\end{table}

The percentage of edges of $G$ (nodes of $\Tilde{G}$) processed by  Algorithm~\ref{algoEarliestPathVersion2} and Algorithm~\ref{algoFastestPathVersion2} are shown in Figure~\ref{Fig:EAT_conn_processed} and Figure~\ref{Fig:FPD_conn_processed}, respectively. For most of the data sets, the execution times of Algorithm~\ref{algoEarliestPathVersion2} and Algorithm~\ref{algoFastestPathVersion2} are proportional to the number of nodes processed in the respective algorithms. For instance, the execution time of Algorithm~\ref{algoEarliestPathVersion2} on data sets Sweden and London is high, as the number of nodes processed is higher. Similarly, the running time of Algorithm~\ref{algoFastestPathVersion2} on data-sets Sweden, Switzerland, and London are high, because the number of nodes being explored are higher.
These insights can be observed from Table~\ref{tab:EAT_AVG_Conn} and Table~\ref{tab:FPD_AVG_Conn}.

% \begin{table}[H]
% \resizebox{\columnwidth}{!}{%
% \begin{tabular}{|l|c|cc|}
% \hline
% \multirow{2}{*}{Data Sets} &
%   \multirow{2}{*}{|E(G)|} &
%   \multicolumn{2}{c|}{\textbf{Average number of nodes processed}} \\ \cline{3-4} 
%  &
%    &
%   \multicolumn{1}{c|}{\textbf{\begin{tabular}[c]{@{}c@{}}EAT \\ Algorithm~\ref{algoEarliestPathVersion2}\end{tabular}}} &
%   \textbf{\begin{tabular}[c]{@{}c@{}}FPD \\ Algorithm~\ref{algoFastestPathVersion2}\end{tabular}} \\ \hline
% \textbf{Chicago}     & 98157    & \multicolumn{1}{c|}{456}    & 5205    \\ \hline
% \textbf{London}      & 14064967 & \multicolumn{1}{c|}{82976}  & 4530119 \\ \hline
% \textbf{Los Angels}  & 1979340  & \multicolumn{1}{c|}{41069}  & 1386118 \\ \hline
% \textbf{Madrid}      & 1994688  & \multicolumn{1}{c|}{12238}  & 1112698 \\ \hline
% \textbf{Newyork}     & 514390   & \multicolumn{1}{c|}{326}    & 15151   \\ \hline
% \textbf{Paris}       & 1068284  & \multicolumn{1}{c|}{210}    & 1552    \\ \hline
% \textbf{Petersburg}  & 4437010  & \multicolumn{1}{c|}{11430}  & 1075480 \\ \hline
% \textbf{Sweden}      & 6567745  & \multicolumn{1}{c|}{132250} & 1789260 \\ \hline
% \textbf{Switzerland} & 9261315  & \multicolumn{1}{c|}{52832}  & 939813  \\ \hline
% \end{tabular}%
% }
% \caption{Average number of nodes in $E(G)$ processed in edge-scan-dependency-graph}
% \label{tab:Connections_Processed}
% \end{table}

Many public transport administrators explore various insights on their cities and suggest the public to start a journey from a suitable time, that minimize the duration time. From our experiments, on many data-sets, we observe that the suitable time to start a journey is around $6$ AM or $3$ PM, to minimize the journey time. 

\begin{figure}[H]
\centering
\begin{subfigure}[b]{0.4\textwidth}
\centering
\includegraphics[width=\textwidth]{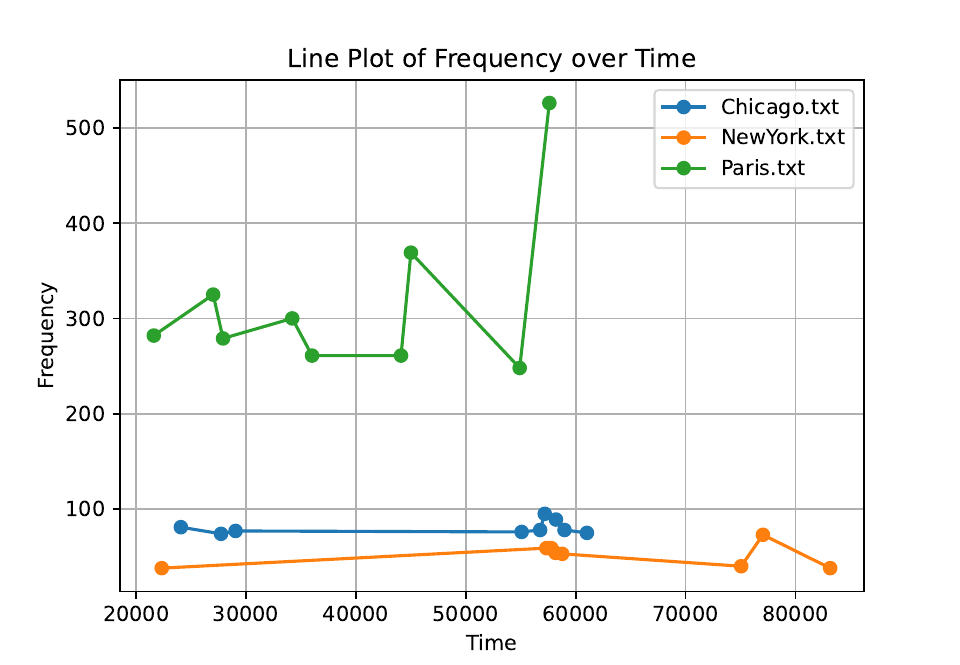}
% \caption{Frequencies of starting times}
% \label{fig:freq1}
\end{subfigure}
\hfill
\begin{subfigure}[b]{0.4\textwidth}
\centering
\includegraphics[width=\textwidth]{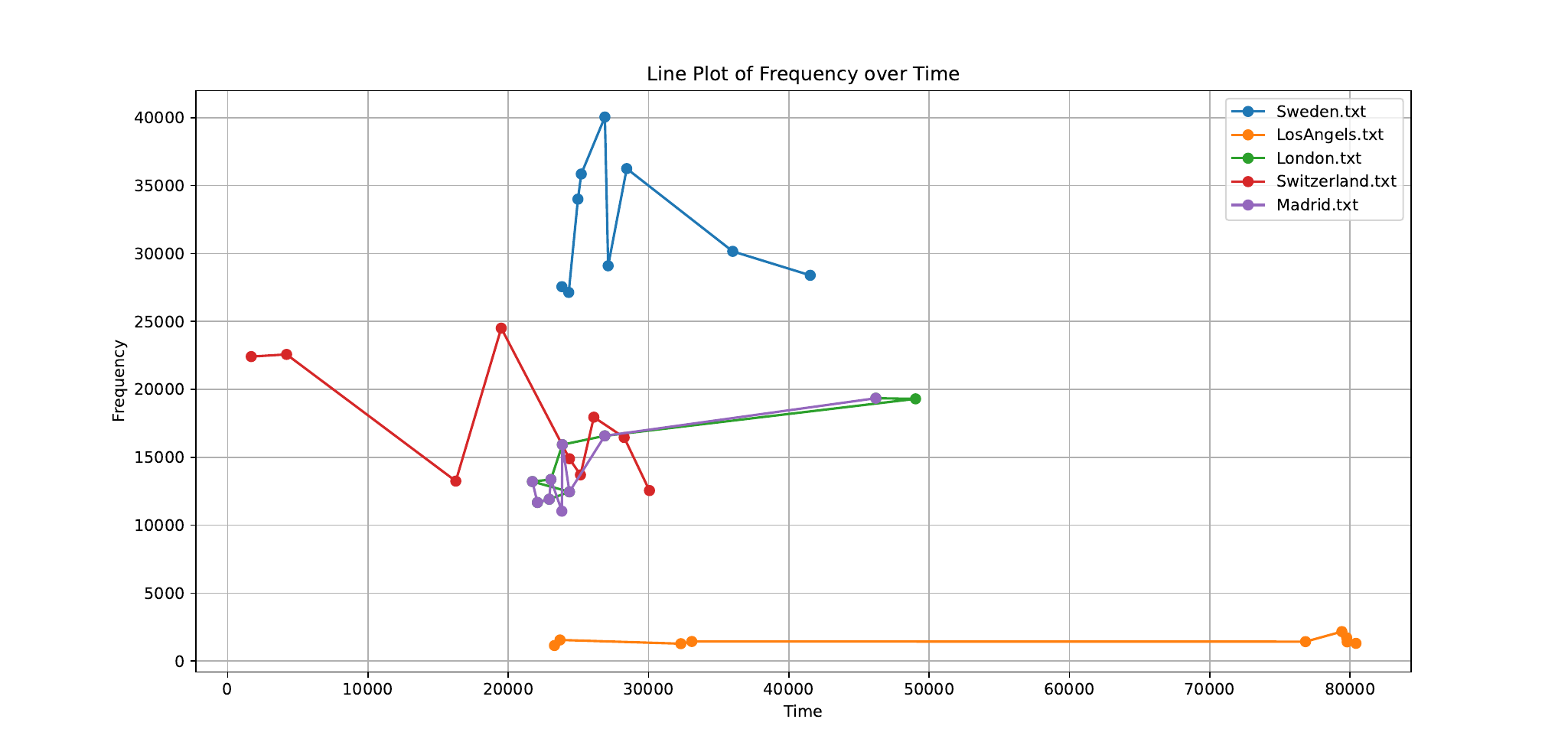}
% \caption{Frequencies of starting times}
% \label{fig:freq2}
\end{subfigure}
\caption{Frequencies of starting times}
\label{fig:freq1}
\Description{Temporal Graph}
\end{figure}

% These insights can be observed from Figure~\ref{fig:freq1} and Figure~\ref{fig:freq2}

% \begin{figure}[H]
% \centering
% \includegraphics[width=\columnwidth]{Images/Freq_2.pdf}
% \caption{Frequencies of starting times}
% \label{fig:freq1}
% \end{figure}

% \begin{figure}[H]
% \centering
% \includegraphics[width=\columnwidth]{Images/Freq_1.pdf}
% \caption{Frequencies of starting times}
% \label{fig:freq2}
% \end{figure} 

% We are conducting additional experiments with 100 generated queries comprising source vertices and various readiness times. We define a range of readiness times, including starting, middle, and maximum times. These times are applied to the source vertex to create different scenarios, and we compute the earliest arrival time using Algorithm \ref{algoEarliestPathVersion1}. The results are displayed in Table.

% it shows d and analyzed to assess how the earliest arrival time varies with different start times, enabling insights into the efficiency and performance of the algorithms in the context of temporal graph analysis.
% \newpage
\section{Applications}

Our Algorithms helps to solve various path problems in various domains that help public transport administrator.

\begin{itemize}[left=0pt, itemsep=0.5pt]
    \item \textit{How much time is required to visit any place in the city from the given source location?}

    This question is addressed by calculating the fastest path duration to reach the farthest or longest location from the provided source location. This information is beneficial for public transport administrators in determining the time needed to reach any place in the city from a given source location. It aids in planning efficient routes and optimizing travel time.
    % This question can be answered by computing the fastest path duration required to reach farthest or longest location from the given source location. It will be helpful for public transport administrators to determine the time required to visit any place in the city from a given source location. This helps in planning efficient routes and optimizing travel time. 

    \item \textit{Coverage Analysis - How many places can be covered in $k$ hours from the given source location?}
    % \textit{Coverage Analysis - Determining the Reach within $k$ Hours from the Source Location}
    
    Administrators can evaluate the number of places that can be covered within a designated time frame. For instance, they may want to ascertain how many locations can be visited within $k$ hours from a specified source point. This analysis proves valuable for resource allocation and optimizing service coverage. This challenge is addressed by incorporating a modification in Algorithm \ref{algoFastestPathVersion2}, terminating each phase if the duration surpasses $k$ hours.  
    % Administrators can assess how many places can be covered within a specific time constraint, such as determining the number of places that can be visited within $k$ hours from the given source location. This aids in resource allocation and optimizing service coverage. This problem can be solved by terminating each phase of the fastest path algorithm \ref{algoFastestPathVersion1} if duration exceeds $k$ hours.

    \item \textit{Percentage Coverage - How much time is required to cover $k\%$ of the city from the given source location?}
    % \textit{Percentage Coverage - Determining the time required to cover $k\%$ of the city from the given source location.}

    Administrators can ascertain the time needed to cover a specific percentage of the city, such as determining the duration required to cover $k\%$ of the city from the provided source location. This insight aids in understanding the reach and accessibility of services across different areas. Addressing this challenge involves incorporating a termination condition in each phase of the Algorithm ~\ref{algoFastestPathVersion1}. Termination occurs when the coverage of a specific phase surpasses $k\%$. The algorithm monitors the maximum journey duration in each phase, and identify the smallest one, known as the global min-journey time. A local phase can be terminated if the local journey time exceeds either the global min-journey time or $k\%$ coverage in the current phase. Additionally, the global min journey time is updated whenever a better local journey time is achieved in the current phase.
    % Administrators can determine the time required to cover a certain percentage of the city, such as finding the duration needed to cover 65\% of the city from the given source location. This provides insights into the reach and accessibility of services across different areas. This problem can be addressed by introducing a termination condition in each phase of the fastest path algorithm \ref{algoFastestPathVersion1}. The termination occurs when the coverage of a specific phase exceeds $k\%$. The algorithm keeps track of the maximum journey duration in each phase, referred to as the global min-journey time. A local phase can be terminated if the local journey time exceeds either the global min-journey time or 65\% coverage in the current phase. Additionally, the global min journey time is updated whenever a better local journey time is obtained in the current phase.
    
\end{itemize}

\section{Related Work}

The goal-oriented variants of various problems are extensively explored in the context of public transport networks. In particular, multiple algorithms such as \textsc{raptor}, transfer patterns, connection scan accelerated, and trip-based have been designed to extract paths ranging from earliest arrival and profile search to multi-criteria paths ~\cite{raptor_delling2015,transferPatterns_2010_Hannah,connectionScanAcc_2014_Ben,dibbelt2018connectionScan,tripBased_2015_Sascha}. When dealing with real world temporal graphs, the goal-oriented fastest paths can be retrieved using indexing techniques \textsc{ttl} and \textsc{top-chain} ~\cite{ttlSIGMOD2015,topChainICDE2016_wu_rechablity}. All these algorithms are primarily targeted for goal-oriented and not for single-source variant.

Our focus is now on single-source variants of \textsc{eat} and \textsc{fpd}. Xuan et.al have designed a vertex centric algorithm to solve single-source \textsc{eat} problem ~\cite{foremostJourney2003}. Later, this is improved using edge stream representation and the associated edge centric algorithm ~\cite{wu_2016_Fast_EAT_Temporal}. Similarly, \textsc{fpd} is targeted using edge centric ~\cite{wu_2016_Fast_EAT_Temporal} and vertex centric algorithms on the transformed graphs. In particular, many graph transformations ~\cite{wu_2016_Fast_EAT_Temporal,zschoche2020_TRG,Sahani2021} are developed to solve single-source shortest and fastest path problems. In all these transformations, a temporal graph is transformed to an equivalent time respected graph, in which the departure and arrival times of temporal edges are treated as vertices and added edges to capture the necessary dependencies. The key idea in these works is to reduce the number of vertices and edges in the transformed graphs. In this paper, we have compared our results with edge-stream and \textsc{trg} algorithms, which are the state-of-the-art techniques to solve \textsc{eat} and \textsc{fpd} problems.   

Due to the wide range of applications, building efficient solutions on public transport networks has received attention from both traditional and machine learning algorithms \cite{2023_Liu_min_transfers,2020_BigData_Applications_ML,2020_ICCE_Panovski_prediction_ML}.
%\textcolor{red}{few more lines to be written on recent works or interesting works on public transport problems.. few lines will be enough}
Recent research on public transportation encompasses diverse areas. Letelier et al. (2023) focused on compacting large public transport data ~\cite{letelier2023compacting}. Dahlmanns et al. (2023) optimized transportation networks considering congestion \cite{dahlmanns2023optimizing}, while Cao et al. (2023) sought to enhance public transportation quality through dynamic bus departure times \cite{cao2023improving}. Drabicki et al. highlights the significant impact of capacity-constrained models on transportation outcomes \cite{drabicki2023public}. These studies contribute to data efficiency, network optimization, and service quality improvements.
Machine learning algorithms are used to find important features in the public transport data, influencing whether a vehicle stops on time or late \cite{2020_BigData_Applications_ML}. Also, the prediction of arrival times of public transport vehicles is well studied and useful in daily routine \cite{2020_ICCE_Panovski_prediction_ML}.

\section{Conclusion}
% Our research aimed to tackle fundamental path problems within a public transportation network represented as a temporal graph. We achieved substantial efficiency improvements through a comprehensive analysis and comparisons with state-of-the-art algorithms. Notably, the fastest path problem exhibited a remarkable 35-fold speedup, while the earliest arrival time problem showed an even more impressive 101-fold acceleration. Our algorithms and data-structures have potential to solve other variants of path problems and help to perform public transport analytics.
In this research, we addressed key path finding challenges in public transportation networks by developing efficient near linear-time algorithms for the earliest arrival time and fastest path duration problems. Utilizing edge-scan-dependency graphs, our approach significantly outperformed existing methods, evidenced by a 34-fold speedup in \textsc{fpd} and an unprecedented 183-fold improvement in \textsc{eat}. Key to our success was the novel use of useful dominating paths and edge-scan-dependency graph data structures, ensuring minimal edge processing. These advancements were empirically validated on real-world datasets, demonstrating not only theoretical innovation but also practical effectiveness in urban transit systems. Our work marks a significant step forward in optimizing public transportation routes, offering powerful tools for transit planners and setting a new standard in algorithmic solutions for urban mobility challenges.

\bibliographystyle{ACM-Reference-Format}
	\bibliography{main.bib}

\end{document}